\numberwithin{equation}{section}
\theoremstyle{plain}
\newtheorem{theorem}{Theorem}[section]
\newtheorem{proposition}[theorem]{Proposition}
\newtheorem{corollary}[theorem]{Corollary}
\theoremstyle{definition}
\theoremstyle{remark}
\newtheorem*{remark}{Remark}
\newcounter{nalg}[section] 
\renewcommand{\thenalg}{\thesection .\arabic{nalg}} 
\newcommand{\EE}{\mathbb{E}}
\newcommand{\PP}{\mathbb{P}}
\newcommand{\NN}{\mathbb{N}}
\newcommand{\RR}{\mathbb{R}}
\newcommand{\dd}{\mathrm{d}}
\newcommand{\Var}{\mathrm{Var}}
\newcommand{\Cov}{\mathrm{Cov}}
\newcommand{\eps}{\varepsilon}
\newcommand{\Ex}{\text{Ex}}
\newcommand{\spec}{v}
\newcommand{\evi}{\xi}
\newcommand{\vario}{\gamma}
\begin{document}

\begin{frontmatter}
	
\title{A comparative tour\\ through the simulation algorithms for max-stable processes}

\runtitle{Simulation algorithms for max-stable processes}

\begin{aug}
\author{\fnms{Marco} \snm{Oesting}\ead[label=e1]{marco.oesting@mathematik.uni-stuttgart.de}}
\and
\author{\fnms{Kirstin} \snm{Strokorb}\ead[label=e2]{strokorbk@cardiff.ac.uk}}

\runauthor{M.~Oesting and K.~Strokorb}

\affiliation{University of Stuttgart and Cardiff University}

\address{\small Marco Oesting, University of Stuttgart, Stuttgart Center for Simulation Science \& Institute for Stochastics and Applications, 70569 Stuttgart, Germany, \printead{e1}.}

\address{\small Kirstin Strokorb, Cardiff University, School of Mathematics, Cardiff CF24 4AG, UK, \printead{e2}.}
\end{aug}

\begin{abstract}
Being the max-analogue of $\alpha$-stable stochastic processes, max-stable processes form one of the fundamental classes of stochastic processes. With the arrival of sufficient computational capabilities, they have become a benchmark in the analysis of spatio-temporal extreme events. Simulation is often a necessary part of inference of certain characteristics, in particular for future spatial risk assessment. In this article we give an overview over existing procedures for this task, put them into perspective of one another and use some new theoretical results to make comparisons with respect to their properties.\vspace*{1mm}
\end{abstract}

\begin{keyword}[class=MSC]
\kwd[Primary ]{60G70}
\kwd{00A72}
\kwd[; Secondary ]{60G60}
\end{keyword}

\begin{keyword}
\kwd{spectral representation}
\kwd{threshold stopping}
\kwd{extremal functions}
\kwd{error assessment}
\end{keyword}

\end{frontmatter}

\section{Introduction} \label{sec:intro}

The severe consequences of extreme events such as strong windstorms, heavy precipitation or heat waves emphasize the need of appropriate statistical models for these types of events. To adequately assess the associated risk, not only the intensity, but also the spatial or spatio-temporal extent of extremes has to be taken into account. Motivated by the central limit theorem, classical geostatistics typically applies Gaussian processes to model the bulk of the distribution and the dependence structure of continuous variables.
Being the natural analogue to Gaussian or, more generally, $\alpha$-stable processes for maxima, max-stable processes are frequently used to model spatial and spatio-temporal extremes, specifically in environmental applications \citep[cf.][and references therein]{davison-etal-12, davison-etal-18}.

The prevalence of max-stable processes as a benchmark is justified by the fact that they arise as the only possible location-scale max-limits of stochastic processes in the following way: Let $X_1,X_2,X_3,\dots$ be independent copies of a real-valued 
process $X=\{X({\bm x})\}_{{\bm x} \in S}$ on some {locally compact metric space} $S$. If there exist 
suitable location-scale norming sequences $a_n=a_n({\bm x}) > 0$ and 
$b_n=b_n({\bm x}) \in \RR$, such that the law of
\begin{align*}
\bigg\{ \max_{i=1,\dots,n} \frac{X_i({\bm x})-b_n({\bm x})}{a_n({\bm x})} \bigg\}_{{\bm x} \in S}
\end{align*}
converges in distribution to a stochastic process $Z=\{Z({\bm x})\}_{{\bm x} \in S}$,
then the resulting limit process $Z$ necessarily satisfies a stability property
with respect to the maximum operation. More precisely, 
\begin{align*}
\bigg\{ \max_{i=1,\dots,n} \frac{Z_i({\bm x})-d_n({\bm x})}{c_n({\bm x})}  \bigg\}_{{\bm x} \in S} = \{Z({\bm x})\}_{{\bm x} \in S} \quad \text{in distribution}
\end{align*}
for independent copies $Z_1,Z_2,Z_3,\dots$ of $Z$ and appropriate location-scale norming 
sequences $c_n=c_n({\bm x}) > 0$ and $d_n=d_n({\bm x}) \in \RR$. In this sense,
the process $Z$ is \emph{max-stable} and the process $X$ lies in its 
\emph{max-domain of attraction}. 

Over the last decades, max-stable processes have gained attention from several research perspectives.
In the 80s and early 90s, they have mainly been studied from a 
probabilistic angle, resulting, for instance, in a full characterization of the
class of sample-continuous simple max-stable processes, see \citet{dh84}, 
\citet{norberg86} and \citet{ghv90}, among others. This work has been 
complemented by a precise description of the corresponding max-domain of 
attraction in \citet{dhl01}. Since the early 2000s, methods for statistical
inference have been developed and, in parallel, suitable models for subclasses
of max-stable processes have been introduced. Important examples for such 
models comprise Gaussian extreme value processes \citep{smith90}, extremal 
Gaussian processes \citep{schlather02}, Brown-Resnick processes \citep{ksdh09}
and extremal-$t$ processes \citep{opitz13} providing a generalization of 
extremal Gaussian processes. With these flexible models and tools at hand, 
max-stable processes have become attractive for practitioners from various 
areas, in particular from environmental sciences.

However, a serious drawback is that most probabilistic properties of max-stable
processes are analytically intractable. Therefore, simulation is often a necessary part of inference of certain characteristics, in particular for future spatial risk assessment. Meanwhile, starting from a general idea coined by \citet{schlather02}, a number of approaches to the simulation of max-stable processes have emerged:  They include both approximate \citep{oks12,os18} and exact \citep{dm15,deo16,osz18,lbdm16} simulation procedures, some of them focusing on the particularly difficult problem of simulating within the subclass of Brown-Resnick processes. A first overview over some of these methods has been given in \citet{ord16}. The present article extends and updates this overview. New theoretical results allow to put the different methods into perspective of one another and to make comparisons with respect to their theoretical properties and their performance in numerical experiments.

Our text is structured as follows. To illustrate the need for efficient and accurate simulation of max-stable processes in an application, Section~\ref{sec:dataexample} describes a worked example addressing questions that involve the (joint) distribution of areal maxima over certain time horizons. Subsequently, we review existing simulation approaches in Section~\ref{sec:survey}. Besides generic algorithms for the 
simulation of arbitrary  max-stable processes, we also present more specific 
procedures that have been developed for the important and popular subclass of
Brown-Resnick processes (Section~\ref{sec:BR}). In Section~\ref{sec:eff-acc}, 
we provide new theoretical results that allow us to evaluate and compare the
simulation approaches with respect to their efficiency and their accuracy. 
The results of a numerical study comparing the performance of the generic algorithms in a wide range of scenarios are reported in Section~\ref{sec:study}.
We conclude with a discussion in Section~\ref{sec:discussion} including further practical advice. 

Full reproducible code of the analysis of Section~\ref{sec:dataexample} and the numerical study of Section~\ref{sec:study}, all of which were carried out in the statistical software environment \texttt{R} \citep{gnuR}, are available at   \href{https://github.com/Oesting/Comparative-tour-through-simulation-algorithms-for-max-stable-processes}{\texttt{https://github.com/Oesting/Comparative-tour-}}
\href{https://github.com/Oesting/Comparative-tour-through-simulation-algorithms-for-max-stable-processes}{\texttt{through-simulation-algorithms-for-max-stable-processes}}.\vspace*{3mm}

\paragraph{Marginal standardization.} For convenience, let us recall a structural result at the very start. We shall assume throughout that all marginal distributions of a max-stable process $Z=\{Z({\bm x})\}_{{\bm x} \in S}$ are non-degenerate (i.e.\ not concentrated on a single value). This implies that the marginal distributions of $Z$ are necessarily \emph{Generalized Extreme Value (GEV) distributions} (\cite{jenkinson55} and \cite{vonMises36})
\begin{align*}
\PP(Z(\bm x) \leq z) = G_{\evi(\bm x)} \bigg(\frac{z-\mu(\bm x)}{\sigma(\bm x)}\bigg), \qquad G_{\evi(\bm x)}(z)=\exp(-(1+\evi(\bm x) z)^{-1/\evi(\bm x)}_+),
\end{align*}
with shape, location and scale parameters $\evi(\bm x),\mu(\bm x) \in \RR$ and $\sigma(\bm x) > 0$, a result that goes back to \cite{frechet27} and \cite{fishertippett28} and was first rigorously proved in \cite{gnedenko43}.
As the max-stability property is preserved under marginal transformations 
within the class of GEV distributions, attention is often drawn to 
max-stable processes with fixed marginal distributions such as the class of 
so-called \emph{simple max-stable processes}, the ones that have standard unit Fr\'echet 
marginal distributions, i.e.\ $\PP(Z(\bm x) \leq z) = \exp(-1/z)$ for all $z>0$
and $\bm x \in S$. To make this precise, if $Z$ is a general max-stable process with shape function $\evi$, location function $\mu$ and scale function $\sigma$, then
\begin{align*}
  \bigg\{
  \left(
  1+ \evi(\bm x) \frac{Z(\bm x)-\mu(\bm x)}{\sigma(\bm x)}
 \right)^{1/\evi(\bm x)}
  \bigg\}_{\bm{x} \in S}
\end{align*}
is a simple max-stable process and, conversely, if $Z$ is a simple max-stable process, then
\begin{align*}
\bigg\{\sigma(\bm x) \frac{Z(\bm{x})^{\evi(\bm x)} -1 }{\evi(\bm x)}  + \mu(\bm x) \bigg\}_{\bm{x} \in S}
\end{align*}
is a max-stable process with general GEV marginal distributions.
Thus, the simple max-stable process associated with a general max-stable process in this way is a natural object to encapsulate its dependence structure (similarly to the disentanglement of a multivariate distribution into marginal distributions and a copula). In addition, if we are concerned about the simulation of a max-stable process with given marginal distributions, this amounts to the simulation of a simple max-stable process and transforming it thereafter to the desired marginal distributions.

\section{Data example}\label{sec:dataexample}

In statistical practice, max-stable processes are used to approximate pointwise block maxima taken over sufficiently long time periods. In this data example we consider daily maximum summer temperatures from 1990 to 2019 that were measured at 18 inland stations in the Netherlands and are freely available from \url{http://projects.knmi.nl/klimatologie/daggegevens/selectie.cgi} (cf.~Figure~\ref{fig:OneStation})
to answer questions about \dots
\begin{enumerate}[label={(Q\arabic*)},itemsep=0mm]
\item \ldots the distribution of the maximum inland temperature over a period of two weeks in summer,
\item $\ldots$ the probability that three subregions (SW, SE, NE, cf.~Figure \ref{fig:Q2}) experience a joint exceedance of $38^\circ$C within the same period of 14 days in summer.
\end{enumerate}

To be more precise, when we say ``inland'', we mean every location in the Netherlands that is 15\,km away from the coast\footnote{Coastline data were obtained from \texttt{rnaturalearth} \citep{R_rnaturalearth} and the package \texttt{geosphere} \citep{R_geosphere} used to compute coast distances. The data at stations in this region exhibit a more homogeneous behaviour compared to including stations closer to the coast}. In this study it is represented by a grid $S=\{{\bm y_j}\}_{j=1}^{4712}$ of 4712 inland locations with (approximate) grid distance 2.5\,km as displayed in Figures~\ref{fig:OneStation} (left) and \ref{fig:LocationFunctionECplot} (right).  As regards the temperature data, exemplarily, Figure~\ref{fig:OneStation} shows the daily maximum temperatures on 84 summer days (5 June to 27 August) from 1990 to 2019 as well as their 14 day block maxima at one of the 18 stations, Deelen. Considering temporal dependence, we did not detect strong evidence against independence among 14 day block maxima, whilst spatial dependence seems to be of paramount importance, especially in the extreme values (as we shall see confirmed below). Hence, our working assumption is that these 14 day block maxima can be considered as 180 independent samples\footnote{Missing values only occur at two sites, Stavoren and Arcen, where we lack knowledge about the first 29 summer days in 1990. The corresponding 14 day block maxima have been removed from the analysis.} (30 years with 6 blocks/year) of a max-stable process $\{Z({\bm x})\}_{{\bm x} \in S}$ on the inland part $S$ of the Netherlands measured at 18 sites ${\bm x_1}, {\bm x_2}, {\bm x_3}, \dots, {\bm x_{18}} \in S$ as displayed in Figures~\ref{fig:OneStation} (left) and \ref{fig:LocationFunctionECplot} (right). \vspace*{3mm}

\begin{figure}
  \centering
  \includegraphics[width=\linewidth]{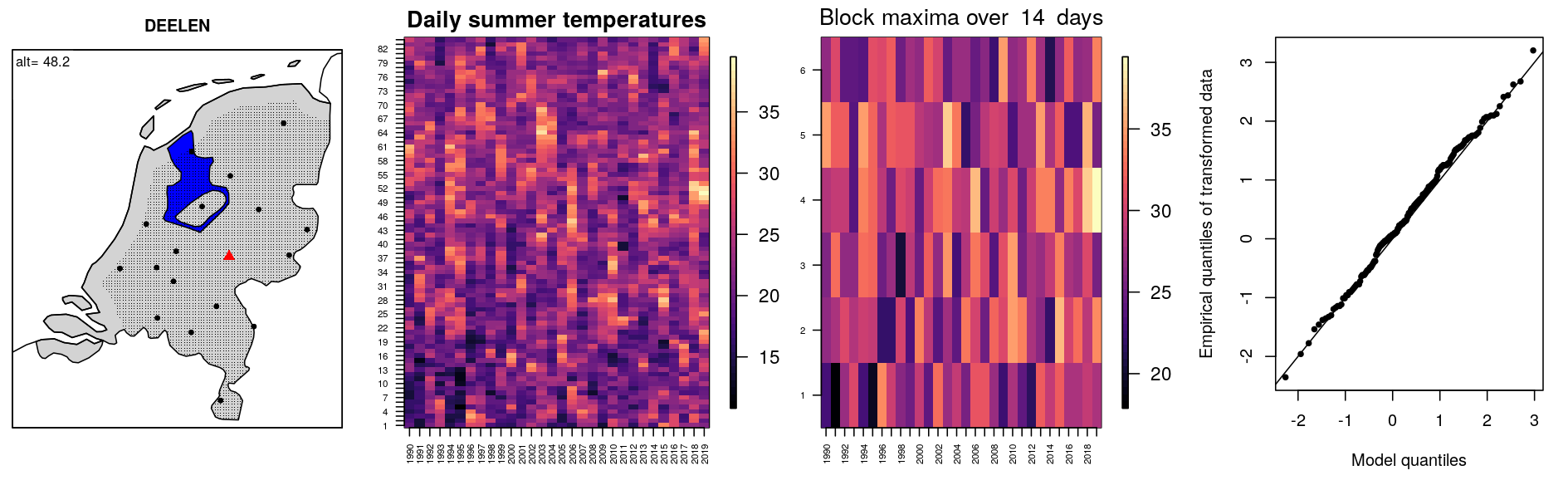}
  \caption{\small Daily maximum summer temperatures on 84 summer days (5 June - 27 August) from 1990 to 2019 at Deelen and their 14 day block maxima. Their location-scale transformation and the model quantiles arise from the marginal estimates as described in Section~\ref{sec:dataexample}.}
  \label{fig:OneStation}
\end{figure}

\paragraph{Estimation.} Following the paradigm of separating estimation of the marginal and dependence structure, a standard approach in Extreme Value Analysis, we first fitted a marginal GEV distribution $G(z;\mu,\sigma,\evi)$ to the 14 day block maxima via maximization of the independence likelihood. Here, while assuming the shape parameter $\xi$ and the scale parameter $\sigma$ to be constant, geographical information on the measurement stations serve as covariates in the location parameter $\mu = \mu_0 + \mu_1 \, \text{\texttt{longitude}} + \mu_2 \, \text{\texttt{latitude}} + \mu_3 \, \text{\texttt{altitude}}$. This maximum likelihood estimator for the GEV parameters is consistent and asymptotically normal for $\xi>-0.5$ (cf.~\cite{chandler2007inference} and \cite{axeljohan17}) and is, for instance, readily available in the \texttt{R} package \texttt{extRemes} \citep{R_extRemes}. In our case, the estimated GEV distributions are of Weibull-type ($\widehat \evi \approx -0.27 <0$). Exemplarily, Figure~\ref{fig:OneStation} (right) displays a QQ-plot of the site-specific location-scale-standardized block maxima against the model quantiles of $G(x;\widehat \evi)$. The marginal GEV-fit has been found reasonable at almost all sites. Figure~\ref{fig:LocationFunctionECplot} (left) displays the estimated location function $\widehat \mu$, when evaluated on the inland grid. Altitude information on the inland grid stems from the ASTER Global Digital Elevation Model V2 \citep{ASTERv2} and was accessed with \texttt{geonames} \citep{R_geonames}.

\begin{figure}
  \centering
  \mbox{
    \includegraphics[height=5cm]{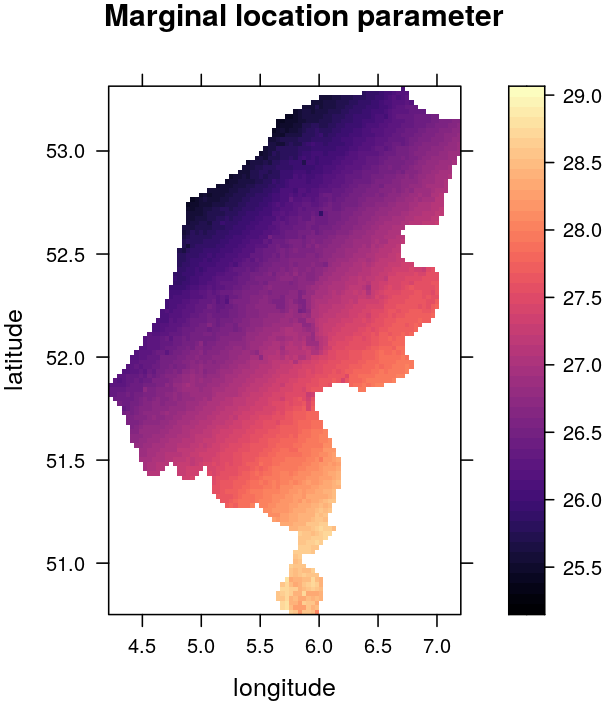}
    \hspace{5mm}
  \includegraphics[height=5cm]{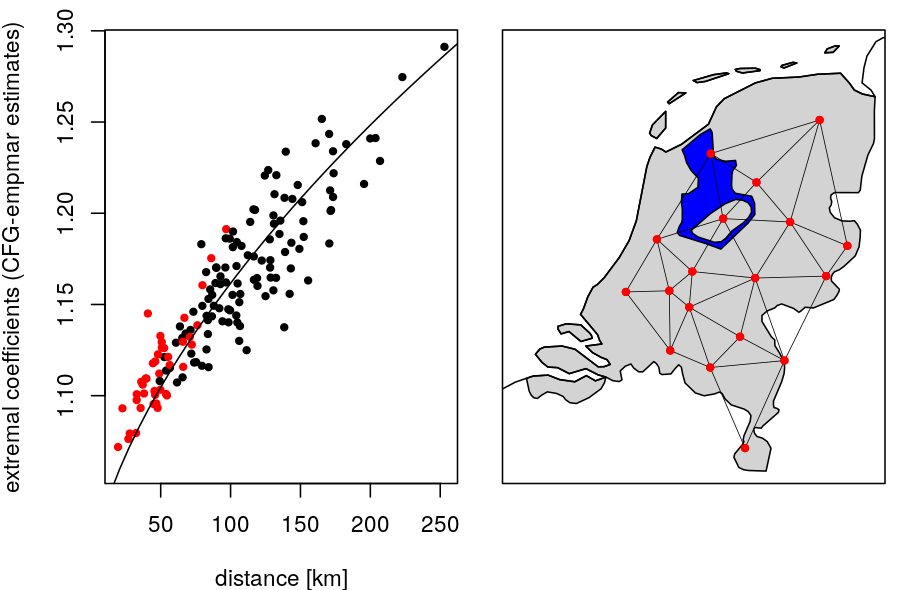}
  }
  \caption{\small Left: Location function of the marginal GEV fit evaluated on the entire inland grid. Middle: Non-parametric estimates of bivariate extremal coefficients corresponding to station pairs as estimated by the non-parametric procedure of \cite{cfg97} (dots) and theoretical counterpart of the fitted isotropic spatial max-stable model (line). Red dots indicate pairs that have been included in the procedure of \cite{ekks16} to estimate the spatial model. Right: Such pairs are connected in this graph.}
  \label{fig:LocationFunctionECplot}
\end{figure}

Subsequently, what remains to be modeled, is the dependence structure of $Z$, or, equivalently, the associated simple max-stable process after marginal standardization as described in Section~\ref{sec:intro}. For simplicity, we decided for an isotropic Brown-Resnick process with variogram $\vario(\bm{h})=\lVert \bm{h}/s \rVert^\alpha$. Details about this process can be found in Section~\ref{sec:BR} here. In addition to the estimation of the six marginal parameters, this amounts to the estimation of two further parameters to account for the spatial dependence, a smoothness parameter $\alpha \in (0,2)$ and a spatial scale $s>0$. An approach for this task, for which consistency and asymptotic normality have been established even under max-domain of attraction conditions, is the M-estimator of spatial tail dependence \citep{ekks16}. We chose it here as a generic approach with good large sample properties that is not specifically tailored to the class of Brown-Resnick processes. Alternatively, one might consider likelihood based methods \citep{hdg16lik} or more bespoke methods such as the Peaks-over-threshold-inspired estimation of \cite{emks15} that was proposed specifically for Brown-Resnick processes. The M-estimator relies on a selection of bivariate distributions only, i.e.\ pairs among the measurement stations, and a number $k$ of upper order statistics to be taken into account for the estimation. Our choice of pairs is displayed in Figure~\ref{fig:LocationFunctionECplot} (right) and we used $k=50$. In the related package \texttt{tailDepFun} \citep{R_tailDepFun}, we used the option \texttt{iterate=T} to improve and update the (internal) distance weight matrix.

A typical sanity check after fitting a spatial max-stable model to station data is a comparison of pairwise non-parametric estimates of bivariate extremal coefficients $\theta({\bm x}_i,{\bm x}_j)$ (cf.~\eqref{eq:extremal-coefficient} for $K=\{{\bm x}_i,{\bm x}_j\}$ in Section~\ref{sec:eff}) with the theoretical extremal coefficient function $\theta(\lVert {\bm x} - {\bm y}\rVert)$  of the estimated spatial model as displayed in Figure~\ref{fig:LocationFunctionECplot} (middle). Our bivariate non-parametric estimates are based on the procedure of \cite{cfg97} as implemented in \texttt{evd} \citep{R_evd}. The very low extremal coefficients, even for large distances, hint already at a strong extremal dependence among high temperatures.\vspace*{3mm}

\paragraph{Simulation.} Finally, we draw 60\,000 independent samples $Z_i = \{Z_i({\bm y_j})\}_{j=1}^{4712}$  from the fitted max-stable model $Z_i {\sim} Z$ ($i=1,\dots,60\,000$) on the inland grid $({\bm y_j})_{j=1}^{4712}$, which can be interpreted in this context as 60\,000 (grid-point-wise) temperature maxima of a 14 day summer interval. All simulations were carried out using the exact extremal functions methodology from \cite{deo16} (cf.~Section~\ref{sec:EF} below). Normal random variables therein were generated with the fast pseudo random number generator from \texttt{dqrng} \citep{R_dqrng}. Figure~\ref{fig:SixSimulationsQ1} shows the first six simulations and a histogram of the corresponding 60\,000 inland maxima, the latter being our data-driven answer to (Q1).

Since each summer consists of six such blocks in our setup, one may see this as sampling 10\,000 times from a summer that is represented by the data. In this sense the $(1-1/(6\ell))$-empirical quantile constitutes a return level estimate for $\ell$ years. Figure~\ref{fig:SixSimulationsQ1} (right) contains these estimates for $\ell\in\{10,100,1000\}$. Since our estimated model has Weibull margins, the maximum upper endpoint across the grid is finite and gives us an estimate for the maximum summer temperature across the entire inland. According to this model it would be $43.1^\circ$C.

\begin{figure}
  \centering
  \mbox{
    \includegraphics[height=5cm]{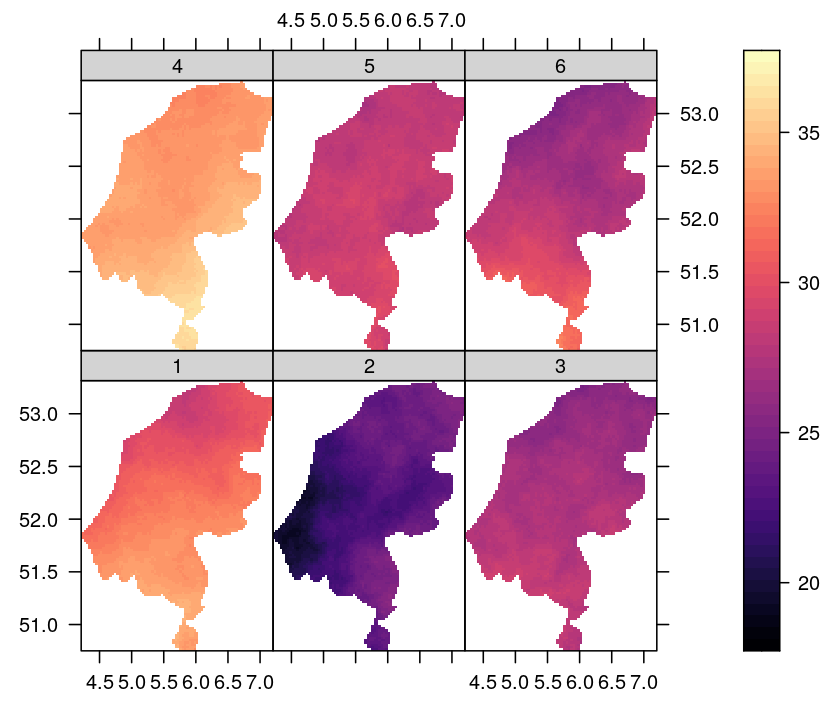}
    \hspace{5mm}
        \includegraphics[height=5cm]{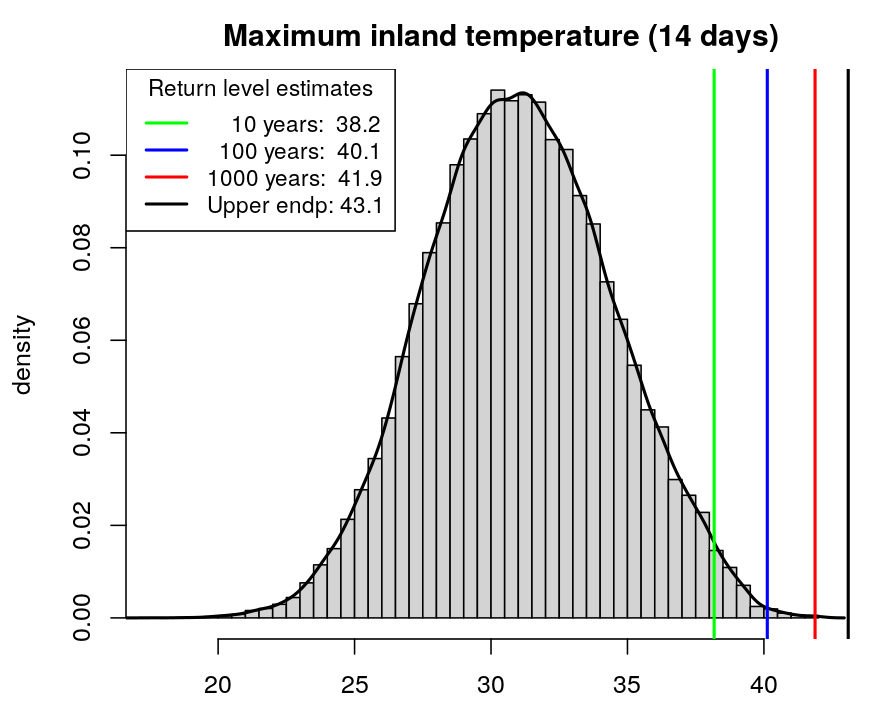}
  }
  \caption{\small Left: The first six out of 60\,000 simulations from the estimated max-stable process on the inland grid after transformation to the estimated GEV marginal distributions. Right: Histogram of the corresponding 60\,000 inland maxima. Return levels need to be interpreted with caution (cf.~Section~\ref{sec:dataexample}).} 
  \label{fig:SixSimulationsQ1}
\end{figure}

With regard to (Q2) we would like to draw attention to Figure~\ref{fig:Q2}, which shows a scatterplot of the 60\,000 joint areal maxima
\begin{align*}
  \Big\{\big(\max_{y_j \in \text{SW}}Z_i({\bm y_j}),\max_{y_j \in \text{SE}}Z_i({\bm y_j}),\max_{y_j \in \text{NE}}Z_i({\bm y_j})\big)\Big\}_{i=1}^{60\,000}
  \end{align*}
across the three marked regions. In particular the plot depicts that in our model dependence among high temperatures increases. In other words, particularly high temperatures are more likely to be experienced across a wider range of the inland. So our purely data-driven model would be in line with a (physical) theory that supports a single cause in such cases, a theory in which particularly large values may arise only during a wide-spread heatwave. Counting the joint exceedances of $38^\circ$C in all three regions (marked red in Figure~\ref{fig:Q2}), leads to an estimate of $\widehat p = 0.0046$ for a joint exceedance during a 14 day summer interval, our data-driven answer to Q2. Referring to an entire summer, one can interpret this as an estimate of $1-(1- \widehat p)^6 \approx 6 \widehat p \approx 2.75\%$ for the probability of a summer, in which $38^\circ$C is hit simultaneously in all three regions during at least one of the six 14 day summer periods. \vspace*{3mm}

\begin{figure}
  \centering
  \mbox{
    \includegraphics[height=5cm]{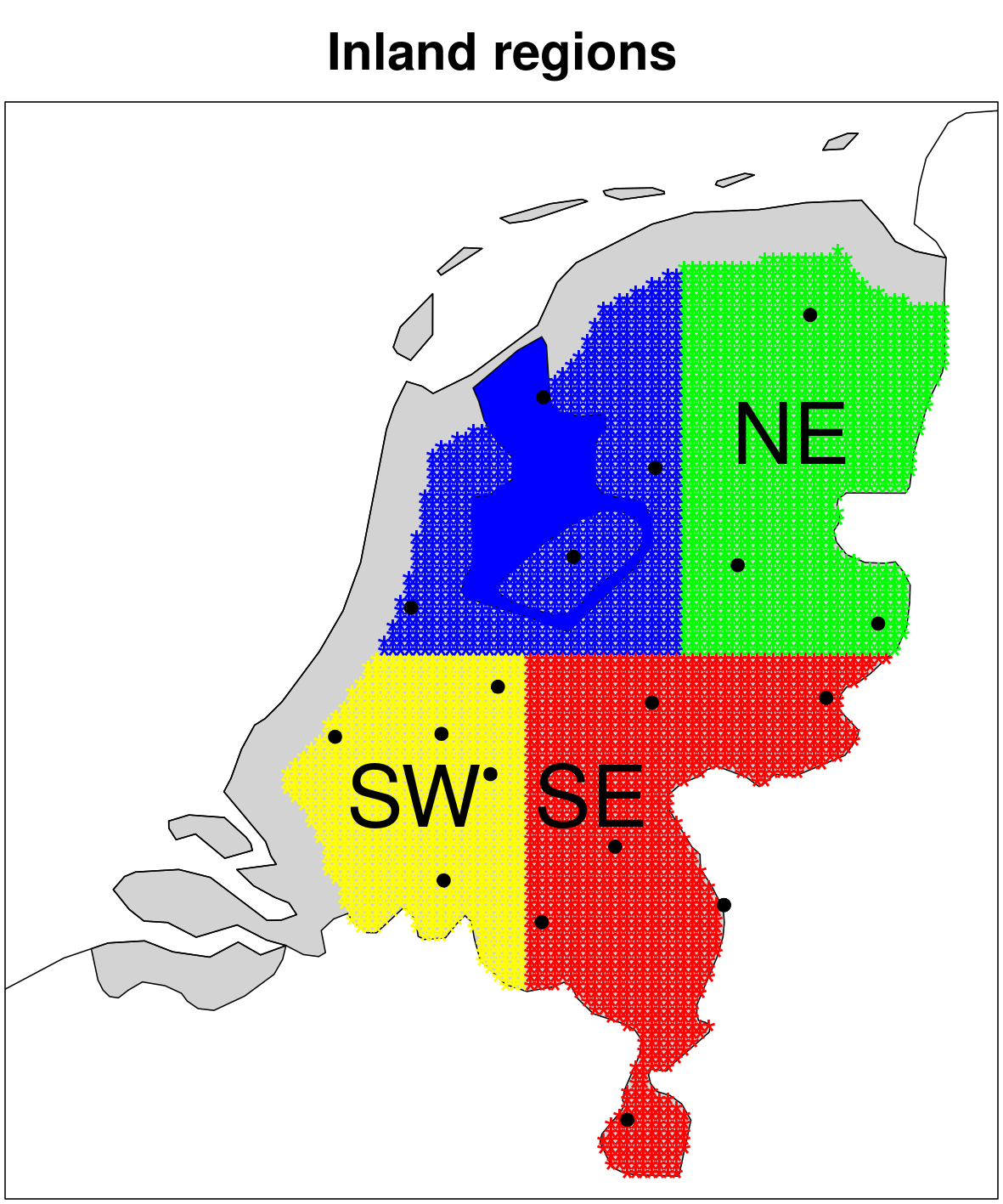}
    \hspace{5mm}
    \includegraphics[height=5cm]{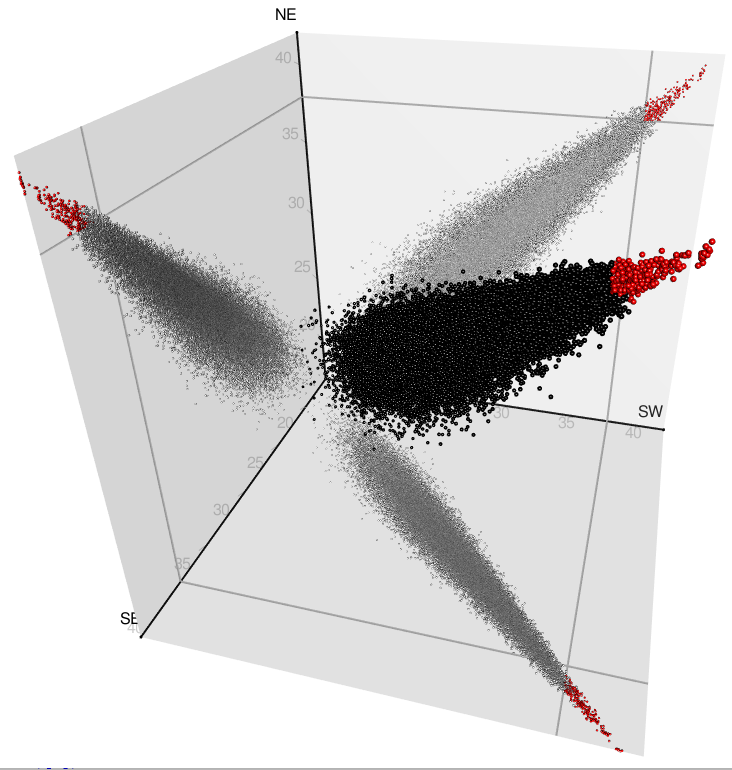}
  }
  \caption{\small Three inland regions (SW, SE, NE) and a scatterplot of simulated 60\,000 joint areal maxima across these regions. Points (and their bivariate projections) are marked red if all three areal maxima exceed $38^\circ$C.} 
  \label{fig:Q2}
\end{figure}

\paragraph{Caution!} All results in this section need to be interpreted with caution. First of all, one needs to account for the uncertainty inherent in the estimated characteristics, such as the ones asked for in (Q1) and (Q2). Provided that the fitted model was perfectly correct, these could be estimated with arbitrary precision by the use of a sufficiently large number of simulations. In practice, however, as summarized in \cite{dh2015}, there are different types of uncertainty in the model fit, including uncertainties related to taking measurements, choice of model class and parameter estimation, for instance.

The vigilant reader will have also noted, that already phrasing the initial questions (Q1) and (Q2) in this way comes along with several assumptions about the data and the underlying processes. Both (Q1) and (Q2) express that we assume to see a more or less homogeneous behaviour of maximum temperatures within a given summer and otherwise independent and identically distributed summers. All the more, we would like to stress that our (imperfect) answers defy easy interpretation in a climatological context that is far beyond the scope of this article.\vspace*{3mm}

Instead, one may understand this section as a simple analysis of the current climate in the sense of the studied questions and time frame. We deliberately draw attention to the spatial rather than temporal aspects here, which is in line with the core contents of this article. And we hope that there is no doubt left about the critical role of the simulations in this data example to provide answers to questions that otherwise would be difficult to address at all.

\section{A survey of simulation approaches}\label{sec:survey}

In this section, we will give an overview over existing algorithms for the 
simulation of a simple max-stable process $\{Z(\bm{x})\}_{\bm{x} \in K}$ on a compact
domain $K$. Almost all simulation approaches are based on the fact that
any sample-continuous simple max-stable process $Z$ possesses a 
\emph{spectral representation} 
\begin{align} \label{eq:spec-repr}
\{Z(\bm{x})\}_{\bm{x} \in K} = \bigg\{ \max_{j \in \NN} \Gamma_j^{-1} V_j(\bm{x})\bigg\}_{\bm{x} \in K} \quad \text{in distribution,}
\end{align}
where $\{\Gamma_j\}_{j \in \NN}$ are the arrival times of a unit rate Poisson
process on $(0,\infty)$ with independent markings $\{V_j\}_{j \in \NN}$ that 
are distributed according to a {non-negative} sample-continuous stochastic
process $\{V(\bm{x})\}_{\bm{x} \in K}$, the so-called \emph{spectral process} 
\citep{dh84,ghv90,penrose92}. Since $Z$ possesses standard unit Fr\'echet margins, the spectral process $V$ satisfies $\EE\{V(\bm{x})\} = 1$ for all $\bm{x} \in K$. Conversely, any continuous stochastic process $V$ that satisfies the moment condition $\EE\{V(\bm{x})\} = 1$ for all $\bm{x} \in K$ gives rise to a max-stable process $Z$ via \eqref{eq:spec-repr}. The representation \eqref{eq:spec-repr} is illustrated in Figure \ref{fig:maxstab-construction}.

\begin{figure}
	\centering
	\mbox{
		\includegraphics[height=5cm]{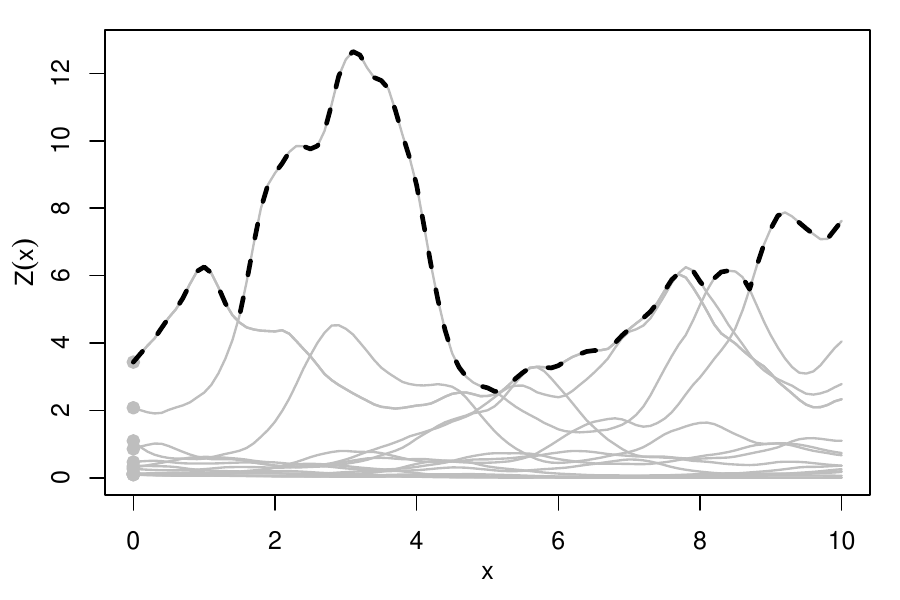}
	}
	\caption{\small Illustration of the spectral representation \eqref{eq:spec-repr}. Grey points represent the reciprocal arrival times $\Gamma_j^{-1}$, $j \in \NN$, while grey lines correspond to the processes $ \Gamma_j^{-1} V_j(\cdot)$, $j \in \NN$. The resulting max-stable
	process $Z$ is marked by the black dashed line.}
	\label{fig:maxstab-construction}
\end{figure}

It is important to note that the law of the max-stable process $Z$ in \eqref{eq:spec-repr} does not uniquely determine the law of the spectral process $V$. Instead, a different  spectral process $V'$ in \eqref{eq:spec-repr} may result in the same max-stable process $Z$. More precisely, two spectral processes $V$ and $V'$ generate the same max-stable process (in distribution) if and only if
\begin{align*}\EE\bigg\{\max_{k=1}^n a_k V({\bm x}_k)\bigg\} = \EE\bigg\{\max_{k=1}^n a_k V'({\bm x}_k)\bigg\}\end{align*}
for all $n \in \NN$, $a_k > 0$ ($k=1,\dots,n$), $\{{\bm x}_1,\dots,{\bm x}_n\} \subset K$ \citep{dH78}. We will call the processes $V$ and $V'$ \emph{equivalent spectral processes} in this case. For instance, multiplying $V$ with a positive random variable with unit expectation and independent of $V$ yields the same max-stable process $Z$. In practice, the choice of spectral process $V$ can have a major  effect on the accuracy and efficiency of a certain simulation algorithm. As a starting point, however, we assume that a max-stable process $Z$ is given by a specific choice of the spectral process $V$ despite the fact that there may be other  more convenient equivalent spectral processes $V'$ for $Z$.

\subsection{Threshold stopping -- the general idea} \label{sec:thresholdstopping}

The first simulation algorithm was introduced by \citet{schlather02} and is 
motivated by the fact that  the points $\{\Gamma_j\}_{j \in \NN}$ are the 
arrival times of a renewal process with standard exponential interarrival 
times. In particular, the points $\{\Gamma_j\}_{j \in \NN}$ are ordered: 
$\Gamma_1 < \Gamma_2 < \ldots$ almost surely. Therefore, we would expect that
the contribution of the process $\Gamma_j^{-1} V_j(\cdot)$ to the maximum in 
\eqref{eq:spec-repr} becomes smaller and smaller as $j$ gets large and at some
point negligible. In other words, the distribution of $Z$ can be approximated 
by the pointwise maximum  
\begin{align*}
Z^{(T)}(\bm{x}) = \max_{j=1,\ldots,T}  \Gamma_j^{-1} V_j(\bm{x}), \qquad \bm{x} \in K,
\end{align*}
where $T$ is a sufficiently large, but finite number. Instead of picking a     
deterministic number $T$, it turns out that an appropriately defined random 
stopping time $T$ results in more accurate approximations. Typically, a 
threshold dependent stopping time 
\begin{align} \label{eq:def-stop}
T=T_{\tau} = \min\Big\{j \in \NN: \ \Gamma_{j+1}^{-1} \tau < \inf_{\bm{x} \in K} Z^{(j)}(\bm{x})\Big\}  
\end{align}
is chosen, where $\tau > 0$ is a prescribed threshold reflecting an upper bound 
for the maximal contribution of the spectral process $V$ to the maximum in 
\eqref{eq:spec-repr}. A precise description of the sampling procedure is given 
by Algorithm~\ref{sec:survey}.\ref{alg:ts} below.

If the spectral process $V$ is uniformly bounded, we can choose $\tau$ large 
enough to satisfy $\sup_{\bm{x} \in K} V(\bm{x}) < \tau$ almost surely. 
Clearly, in this case, \eqref{eq:def-stop} implies that for all $\bm{x} \in K$
and $j > T_{\tau}$
\begin{align*} 
\Gamma_j^{-1} V_j(\bm{x}) < \Gamma_j^{-1} \tau \leq \Gamma_{T_{\tau}+1}^{-1} \tau  < Z^{(T_{\tau})}(\bm{x}).
\end{align*}
Consequently,
\begin{align*}
\{Z^{(T_{\tau})}(\bm{x})\}_{\bm{x} \in K} = \{Z^{(\infty)}(\bm{x})\}_{\bm{x} \in K} \quad \text{almost surely,}
\end{align*}
and a sample from the finite maximum $Z^{(T_{\tau})}$ can be seen as an exact
sample from $Z$, since the distribution of $Z^{(\infty)}$ equals the 
distribution of $Z$ by \eqref{eq:spec-repr}.

If, by contrast, $\PP\{\sup_{\bm{x} \in K} V(\bm{x}) > \tau\} > 0$, there is a 
positive probability that $Z^{(T_{\tau})}(\bm{x}) \neq Z^{(\infty)}(\bm{x})$ 
for some $\bm{x} \in K$. In this case, samples from $Z^{(T_{\tau})}$ only 
provide approximations to the process of interest $Z$.

\begin{algorithm}[caption={Threshold Stopping Algorithm.}, label={alg:ts}]
input: $\text{domain } K, \text{ threshold } \tau $
output: $\text{one (approximate/exact) max-stable process realization }
	   z \text{ on } K$
begin
	set $z(\bm{x}) = 0$ $\text{ for all}$ $\bm{x} \in K$
	simulate $\Gamma \sim \mathrm{Exp}(1)$
	while $\tau / \Gamma \geq \inf_{\bm{x} \in K} z(\bm{x})$
		simulate $v \sim V$
		set $z(\bm{x}) = \max\{ \Gamma^{-1} v(\bm{x}), z(\bm{x})\}$ $\text{ for all}$ $\bm{x} \in K$
		simulate $E \sim \mathrm{Exp}(1)$ 
		set $\Gamma = \Gamma + E$
	end while   
	return $z$
end
\end{algorithm}

\begin{remark}
	The stopping time in \eqref{eq:def-stop} is almost surely finite, since we 
	assumed the max-stable process $Z$ to be sample-continuous. Indeed, 
	sample-continuity implies that $\inf_{\bm x \in K} Z(\bm x) > 0$ almost surely 
	and only a finite number of functions 
	$\{\Gamma_j^{-1} V_j(\bm x)\}_{\bm x \in K}$, $j \in \NN$, contributes to the 
	maximum in \eqref{eq:spec-BR}, see \citet[Theorem 2.2]{de12} and 
	\citet[Corollary 9.4.4]{dhfeBook06} respectively. Therefore, the infimum of the
	$Z^{(j)}$'s on the right-hand side in \eqref{eq:def-stop} exceeds $0$ after a 
	finite number of steps $j$ almost surely, while the inverses of the $\Gamma_j$'s
	on the left-hand side tend to $0$ with probability one. Consequently, the
	stopping time in \eqref{eq:def-stop} is almost surely finite. 
\end{remark}

\subsection{Threshold stopping -- normalizing spectral processes} \label{sec:normalized}

As discussed above, Threshold Stopping Algorithm~\ref{sec:survey}.\ref{alg:ts}
provides exact realizations of the max-stable process $Z$ if the spectral 
process $V$ is almost surely bounded. If this is not the case for $V$, it can
still often be transformed into an equivalent spectral process $V'$, which is uniformly 
bounded and generates the same max-stable process $Z$ in the sense of  \eqref{eq:spec-repr}.
 Two such procedures have been studied in further detail, both of which
transform a given spectral process $V$ into an equivalent spectral process 
$V^{\lVert\cdot\rVert}$, which is normalized w.r.t.\ some norm 
$\lVert\cdot\rVert$, i.e.\ it satisfies
\begin{align}\label{eq:normalized}
\lVert V^{\lVert\cdot\rVert}\rVert = \theta_{\|\cdot\|} \qquad
\text{almost surely for some }\theta_{\|\cdot\|}>0.
\end{align}
The constant $\theta_{\|\cdot\|}$ is uniquely determined by 
\begin{align} 
\theta_{\|\cdot\|}= \EE  \|V\| = \lim_{u \to \infty} u \, \PP\{ \|Z\| > u\}.
\end{align}
and does not depend on the choice of the starting spectral process $V$.

\begin{remark} \label{rk:pareto}
  More generally, \cite{dombry-ribatet-15} show that for each sample-continuous simple max-stable process $Z$ and each non-negative measurable 1-homo\-gen\-eous functional $\ell$ on $C(K,[0,\infty))$, there exists a spectral process $V^\ell$ for $Z$ in the sense of \eqref{eq:spec-repr} that is uniquely characterized by the property $\ell(V^\ell)=\theta_\ell$ a.s.\ for some constant $\theta_\ell > 0$. The constant $\theta_\ell$ is necessarily given by $\theta_\ell=\EE \ell(V)=\lim_{u\to \infty}u\PP\{\ell(Z)>u\}$ and we may call $V^\ell$ the $\ell$-normalized spectral process of $Z$. If $V$ is an arbitrary spectral process for the simple max-stable process $Z$, the equivalent $\ell$-normalized spectral process $V^\ell$ can be obtained by a measure transform $V^\ell = \theta_\ell Y/\ell(Y)$, where $\PP\{Y \in dv\}=\ell(v)/\theta_\ell\,\PP\{V \in dv\}$. The $\ell$-normalized spectral process $V^\ell$  characterizes extremes of stochastic processes also in terms of threshold exceedances instead of maxima. Let $X$ be a sample-continuous process in the max-domain of attraction of $Z$. Then, as $u\uparrow \infty$, the conditional distribution of $u^{-1} X$ given that $\ell(X) > u$ converges weakly to the distribution of the product $P \cdot \theta_{\ell}^{-1} V^{\ell}$, where $P$ is a standard Pareto random variable independent of the process $V^{\ell}$. Thus, the resulting limit process, the so-called $\ell$-Pareto process \citep{dombry-ribatet-15}, is fully described by the $\ell$-normalized spectral process $V^{\ell}$ and being able to effectively simulate from $V^{\ell}$ has important implications beyond the max-stable context. In what follows, with the max-stable simulation context in view, we consider the cases $\ell=\lVert \cdot \rVert_\infty$ (when $K$ is compact) and $\ell=\lVert \cdot \rVert_1$ and $\ell$ being the evaluation at a single point in $K$ (when $K$ is a finite set) and discuss how the resulting processes $V^\ell$ are related to each other.
\end{remark}

\paragraph{Sup-normalization.} The first transformation of this type was 
introduced in \citet{osz18} who proposed a normalization w.r.t.\ the supremum
norm $\lVert f \rVert_\infty = \sup_{\bm{x} \in K} f(\bm{x})$ for $f\geq 0$. 
Starting from an arbitrary sample-continuous spectral process $V$, the unique
equivalent sup-normalized spectral process $V^{\lVert\cdot\rVert_\infty}$, which satisfies 
$\sup_{\bm{x} \in K} V^{\lVert\cdot\rVert_\infty}(\bm{x}) {=} \theta_{\|\cdot\|_\infty}$
almost surely, can be obtained as the normalization
\begin{align} \label{eq:SN-Y}
V^{\lVert\cdot\rVert_\infty}(\bm{x}) =  \frac{\theta_{\|\cdot\|_\infty} Y(\bm{x})}{\sup_{\bm{x}' \in K} Y(\bm{x}')} \end{align}
of a stochastic process $Y$ with transformed law 
\begin{align*} 
\PP\{Y \in A\} = \frac{1}{\theta_{\|\cdot\|_\infty}} \int_A \sup_{\bm{x} \in K} v(\bm{x}) \, \PP\{V \in \dd v\}, \quad A \subset C(K).
\end{align*}

As the spectral processes $V$ and $V^{\lVert\cdot\rVert_\infty}$ are equivalent, we can use the
sup-normalized spectral process for simulation.  By construction, all the sample 
paths of $V^{\lVert\cdot\rVert_\infty}$ are bounded by 
$\theta_{\|\cdot\|_\infty}$ almost surely and, by \citet{rr91}, sample-continuity of $Z$ already implies 
that $\theta_{\|\cdot\|_\infty}$ is finite 
\citep[see also][Theorem 9.6.1]{dhfeBook06}. Therefore, the output of 
Algorithm~\ref{sec:survey}.\ref{alg:ts} with threshold 
$\tau=\theta_{\|\cdot\|_\infty}$ is an exact realization of the max-stable 
process $Z$, when the sup-normalized spectral representation 
$V^{\lVert\cdot\rVert_\infty}$ is used therein. 

By \eqref{eq:SN-Y} simulation of $V^{\lVert\cdot\rVert_\infty}$ can be based on simulation of the transformed process~$Y$. While \citet{osz18} suggest an
approximating Markov Chain Monte Carlo (MCMC) algorithm for this task, more
recently, \citet{defondeville-davison-18} present a relation that allows for exact simulation of $Y$ via rejection sampling provided that a simulation procedure
for the normalized spectral process $V^{\|\cdot\|}$ for an arbitrary norm is 
given. We refer to Section~\ref{sec:BRgeneric} for further efficiency improvements when $V$ is log-Gaussian. The resulting process $Y/\lVert Y \rVert_\infty$ has then the law of $V^{\lVert\cdot\rVert_\infty}/\theta_{\|\cdot\|_\infty}$. While analytic expressions for the normalizing constant $\theta_{\|\cdot\|_\infty}$ are usually not available, it can still be estimated in the course of the simulation procedure, e.g.\ via the relation $\theta_{\|\cdot\|_\infty} = \EE\|V\|_\infty$, and can subsequently be used for an ex-post normalization. 
The constant $\theta_{\|\cdot\|_\infty}$ is also known as \emph{extremal coefficient}, cf.~\eqref{eq:extremal-coefficient}. \vspace*{3mm}

\paragraph{Sum-normalization.} The second transformation of type 
\eqref{eq:normalized} uses a normalization w.r.t.\ the $\ell_1$-norm 
$\lVert f \rVert_1 = \sum_{k=1}^N f(\bm{x}_k)$ for $f \geq 0$ on a finite 
domain $K = \{\bm{x}_1,\ldots,\bm{x}_N\}$. It has been proposed by \cite{dm15}
for the special case of Brown-Resnick processes (see Section~\ref{sec:BR}) and
extended to a more general framework by \cite{deo16}. The starting point 
for the construction of this representation is the fact that, given the distribution of $Z$, for each 
$k \in \{1,\ldots,N\}$, there is a unique equivalent spectral process $V^{(k)}$ of $Z$ with the property
$V^{(k)}(\bm{x}_k)=1$ almost surely. Its law is given by 
\begin{align} \label{eq:measure-Pk} 
P_k(A) = \int_{[0,\infty)^K} v(\bm{x}_k) \, \mathbf{1}{\{ v \in v(\bm{x}_k) A\}}  \,\, \PP\{V \in \dd v\}, 
\end{align}
for $A \subset [0,\infty)^K = [0,\infty)^{\{\bm{x}_1,\ldots,\bm{x}_N\}}$,
where $V$ is an arbitrary spectral process of $Z$. \citet{deo16} show that the 
unique equivalent sum-normalized spectral process $V^{\lVert \cdot \rVert_1}$, which 
satisfies $\sum_{k=1}^N V^{\lVert \cdot \rVert_1}(\bm{x}_k) = 
\theta_{\lVert\cdot\rVert_1} = N$ almost surely, is then given by
\begin{align*} 
V^{\lVert\cdot\rVert_1} = N \frac{Y}{\lVert Y\rVert_1}, \quad \text{where } Y \sim \frac 1 N \sum_{k=1}^N P_k.
\end{align*}

Since $V^{\lVert\cdot\rVert_1}$ satisfies $\lVert V^{\lVert\cdot\rVert_1}\rVert_1 = N$ almost surely, and each component of
a vector is bounded by its $\ell_1$-norm, that is, 
$V^{\lVert \cdot \rVert_1}(\bm x) \leq \lVert V^{\lVert \cdot \rVert_1}\rVert_1 = N$ for all $\bm x \in K$,
the sum-normalized spectral process $V^{\lVert\cdot\rVert_1}$ can be used as 
spectral process for Algorithm~\ref{sec:survey}.\ref{alg:ts} resulting in exact 
realizations for the threshold $\tau=N$. \citet{deo16} also explicitly 
calculate the laws $P_k$, $k=1,\ldots,N$, and, thus, verify that they can be
easily sampled for many popular max-stable models such as Brown-Resnick 
processes or extremal-$t$ processes.

\subsection{Extremal functions} \label{sec:EF}

A simulation procedure that essentially differs from the previously considered 
threshold stopping algorithm is the extremal functions approach, which was also
introduced in \cite{deo16}. Instead of simulating  the elements of the Poisson
point process $\Phi = \{\varphi_j\}_{j \in \NN}$ with $\varphi_j(\cdot) =
\Gamma_j^{-1} V_j(\cdot)$ in an ascending order w.r.t.\ $\Gamma_j$ until a 
stopping criterion is fulfilled, the idea is to simulate only the so-called
\emph{extremal functions} \citep{de13} that definitely contribute to the final
maximum in \eqref{eq:spec-repr} on the finite domain $K=\{\bm{x}_1,\ldots,\bm{x}_N\}$, i.e.\ all the functions $\varphi \in \Phi$ 
such that
\begin{align} \label{eq:extremal-fctn}
\varphi(\bm{x}_k) = \max_{j \in \NN} \varphi_j(\bm{x}_k)
\end{align}
for some $k \in \{1,\ldots,N\}$. It can be shown that, for each 
$k \in \{1,\ldots,N\}$, with probability one, there is exactly one 
\emph{(extremal) function} $\varphi \in \Phi$ satisfying 
\eqref{eq:extremal-fctn}, which we denote by $\varphi^{(k)}_+$ in the 
following. The algorithm subsequently simulates the processes
\begin{align} \label{eq:ef-intermediate}
Z^{(k)}_+(\cdot) = \max_{j=1,\ldots,k} \varphi^{(j)}_+(\cdot).
\end{align}
By construction, the process $Z^{(k)}_+$ is exact at $\bm{x}_1,\ldots,\bm{x}_{k}$, i.e.\
\begin{align*} 
\{Z^{(k)}_+(\bm{x}_i)\}_{i=1,\ldots,k} = \{Z(\bm{x}_i)\}_{i=1,\ldots,k} \quad \text{in distribution}. 
\end{align*}
In particular, the final process $Z^{(N)}_+$ has the same distribution as the 
desired max-stable process $Z$ on the entire domain $K=\{\bm{x}_1,\ldots,\bm{x}_N\}$.

The theory behind this procedure stems from \cite{de13} who show --  by using Slivnyak-Mecke type arguments -- that, for each $k=1,\ldots,N$, the point
process $\Phi \setminus \{\varphi^{(1)}_+, \ldots, \varphi^{(k)}_+\}$ is conditionally independent from $ \{\varphi^{(1)}_+, \ldots, \varphi^{(k)}_+\}$
conditional on $Z(x_1),\ldots,Z(x_k)$. Based on this result,
the law of the collection of all $N$ needed extremal functions $\{\varphi^{(k)}_+\}_{k=1,\dots,N}$ can be described by an iterative procedure:
\begin{itemize}
\item Initially, $\varphi^{(1)}_+ \sim \Gamma_1^{-1} V^{(1)}$.
  \item
    Subsequently, the conditional law of the next extremal function $\varphi^{(k)}_+$, when the previous extremal functions $\varphi^{(1)}_+,\ldots,\varphi^{(k-1)}_+$ are already given,   can be described as follows:
    \begin{enumerate}[label={(\roman*)},itemsep=0mm]
    \item Either $\varphi^{(k)}_+$ is the unique argmax of the evaluation functional at $\bm{x}_k$ in a Poisson point process $\Phi^{(k)}$ on $[0,\infty)^K$, whose intensity measure equals the intensity measure of $\Phi$ restricted to the set
      \begin{align*}
        \{\varphi \in [0,\infty)^K: \ \varphi(\bm{x}_i) < Z^{(k-1)}_+(\bm{x}_i) \text{ for } i<k \ \text{ and } \
          \varphi(\bm{x}_k) >  Z^{(k-1)}_+(\bm{x}_k)
          \}
          \end{align*}
      \item Or, in the event that $\Phi^{(k)}$ is empty, $\varphi^{(k)}_+$ is one of the previous extremal functions.
    \end{enumerate}  
\end{itemize}
The restricted Poisson point process $\Phi^{(k)}$ in (i) can be conveniently simulated by iteratively generating Poisson points $\varphi^{(k)}_j = \Gamma_j^{-1} V_j$ from the original Poisson point process $\Phi$, where we choose the spectral processes $V_j$ as independent copies of the special spectral process $V^{(k)}$ (cf.~\eqref{eq:measure-Pk}). This ensures that $\varphi^{(k)}_j(\bm x_k)=\Gamma_j^{-1} V_j(\bm x_k) =  \Gamma_j^{-1}$ a.s. Thereby, similarly to a threshold stopping procedure, the potential new value at ${\bm x_k}$ is running through the descending values $\Gamma_1^{-1} > \Gamma_2^{-1} > \dots$. However, since we are ultimately interested in $\Phi^{(k)}$ (and not $\Phi$), we test each time if $\varphi^{(k)}_j(\bm{x}_i) < Z^{(k-1)}_+(\bm{x}_i)$ for all $i<k$. If that happens, we have found our new argmax within $\Phi^{(k)}$ and hence extremal function $\varphi^{(k)}_+=\varphi^{(k)}_j$. Otherwise, we will arrive at a $\Gamma_j^{-1}$ that falls below $Z^{(k-1)}_+(\bm{x}_k)$ for some $j\geq 1$. This corresponds to the event that $\Phi^{(k)}$ is empty.

The entire procedure is summarized by 
Algorithm~\ref{sec:survey}.\ref{alg:ef} below. \vspace*{3mm}

\begin{algorithm}[caption={Extremal Functions Algorithm.}, label={alg:ef}]
input: $\text{domain}$ $K=\{\bm{x}_1,\ldots,\bm{x}_N\}$ 
output: $\text{one (exact) max-stable process realization } z \text{ on } K$
begin
	set $z(\bm{x}_k) = 0$ $\text{ for}$ $k=1,\ldots,N$
	for $k = 1, \ldots, N$
		simulate $\Gamma \sim \mathrm{Exp}(1)$
		while $\Gamma^{-1} > z(\bm{x}_k)$
			simulate $v \sim P_k$
			if $k=1$ or $\Gamma^{-1} v (\bm{x}_i) < z(\bm{x}_i)$ $\text{ for all}$ $i < k$  
				set $z(\bm{x}_i) = \max\{ \Gamma^{-1} v(\bm{x}_i), z(\bm{x}_i)\}$ $\text{ for all}$ $i \geq k$
			end if 
			simulate $E \sim \mathrm{Exp}(1)$ 
			set $\Gamma = \Gamma + E$ 
		end while 
	end for   
	return $z$
end
\end{algorithm}

\subsection{Summary of generic simulation approaches} \label{sec:summary-generic}

The aforementioned simulation approaches to obtain a simple max-stable process $Z$ are generic in the sense that they are not tailored to a specific class of max-stable processes. Figure~\ref{fig:generic-summary} provides a quick overview. We would like to draw attention to the fact that each of these approaches relies on the ability to simulate from a (family of) specific spectral process(es) $V$ for the max-stable process $Z$, cf.\ also Table~\ref{table:efficiency}. Considering a finite domain $K=\{\bm x_1,\ldots,\bm x_N\}$, the extremal functions approach needs the simulation of spectral processes $V^{(1)},\dots,V^{(N)}$ (i.e.\ simulation from the $N$ measures in \eqref{eq:measure-Pk}) readily available. The sum-normalized and sup-normalized approaches are by definition based on the availabilty of the spectral processes $V^{\lVert\cdot\rVert_1}$ and $V^{\lVert\cdot\rVert_\infty}$, respectively. As explained in Section~\ref{sec:normalized}, availability of $V^{(1)},\dots,V^{(N)}$ guarantees availability of $V^{\lVert\cdot\rVert_1}$ at a negligible additional computational cost (drawing a point from the finite domain $K=\{\bm{x}_1,\ldots,\bm{x}_N\}$). Further, if any normalized $V^{\lVert\cdot\rVert}$ is available (for instance, the sum-normalized spectral process $V^{\lVert\cdot\rVert_1}$), then it can also be used to simulate the sup-normalized spectral functions $V^{\lVert\cdot\rVert_\infty}$ via rejection sampling up to a multipicative constant \citep{defondeville-davison-18}, cf.\ Section~\ref{sec:normalized}. Rejection sampling is usually costly; the computational cost can be reduced in some special cases, cf.\ Section~\ref{sec:BRgeneric}.

\begin{figure}
  \small
  \centering
  \begin{tcolorbox}
    \hspace*{-4mm}
  \begin{tikzpicture}[inner sep=1mm]
    \tikzstyle{level 1}=[sibling distance=70mm, level distance = 2.2cm]
\tikzstyle{level 2}=[sibling distance=40mm, level distance = 3.2cm]
\node {\begin{minipage}{7cm}\centering
    \textbf{Generic algorithms}\\  
    \end{minipage}
}
child {node {
    \begin{minipage}{6cm}
      \centering
    \textbf{Threshold stopping}\\
           {\footnotesize (\cite{schlather02}; Section~\ref{sec:thresholdstopping})}
           { \begin{align*}Z  = \max_{j=1}^{T_\tau}  \, {\Gamma_j^{-1} V_j}\end{align*}}
           \vspace*{-\baselineskip}
 \end{minipage}}
    child {node {
      \begin{minipage}{4cm}
        \centering
                 \textbf{Sup-normalization\\ (SN)}\\
             { \footnotesize (\cite{osz18}; Section~\ref{sec:normalized})}
             { \begin{align*}V_j \sim V^{\lVert \cdot \rVert_\infty}\end{align*}}
                 \vspace*{-\baselineskip}
    \end{minipage}}}
  child {node {
      \begin{minipage}{4cm}
        \centering
         \textbf{Sum-normalization\\ (DM)}\\
             { \footnotesize (\cite{dm15}; Section~\ref{sec:normalized})}
             {\begin{align*}V_j \sim V^{\lVert \cdot \rVert_1}\end{align*}}
                 \vspace*{-\baselineskip}
      \end{minipage}
  }}
}
child {node[inner sep=-7mm] {\hspace*{-10mm}
    \begin{minipage}{6cm}
      \centering
      \vspace*{\baselineskip}
      \vspace*{\baselineskip}
      \textbf{Extremal functions (EF)}\\
      {\footnotesize (\cite{deo16}; Section~\ref{sec:EF})}
      { \begin{align*}Z = \max_{k = 1}^N  \, {\varphi^\text{(k)}_+}\end{align*} }
    \end{minipage}
}};
  \end{tikzpicture}
  \end{tcolorbox}
  \caption{\small Overview over generic simulation approaches to obtain a max-stable process $Z$ from one of its spectral processes $V$.}
  \label{fig:generic-summary}
  \end{figure}

\section{Specialties for Brown-Resnick processes} \label{sec:BR}

Among several popular max-stable processes, the class of Brown-Resnick 
processes stands out as a parsimonious spatial model that has become a 
benchmark in spatial extremes. Let $\{W(\bm{x})\}_{\bm{x} \in K}$ be a centered
Gaussian process with variance $\{\sigma^2(\bm{x})\}_{\bm{x} \in K}$. Then the 
max-stable process $\{Z(\bm{x})\}_{\bm{x} \in K}$ that is associated to the 
spectral process 
\begin{align}\label{eq:spec-BR}
V(\bm{x})=\exp\bigg(W(\bm{x}) - \frac{\sigma^2(\bm{x})}{2}\bigg), \quad \bm{x} \in K
\end{align}
via \eqref{eq:spec-repr} has unit Fr\'echet marginal distributions and its law
depends only on the \emph{variogram} 
\begin{align*}
\vario: K \times K \rightarrow [0,\infty), \qquad \vario(\bm{x},\bm{y})=\EE(W(\bm{x})-W(\bm{y}))^2
\end{align*}
\citep{kab11}. In particular, for $K \subset \RR^d$, the max-stable process $Z$
is stationary if the variogram $\vario$ depends only on $\bm{x}-\bm{y}$ and by 
slight abuse of notation we write $\vario(\bm{x}-\bm{y})=\vario(\bm{x},\bm{y})$
in this case. The stationary process $Z$ has first been introduced in 
\cite{ksdh09} in this generality and is now widely known as 
\emph{Brown-Resnick process}. In practice, {among unbounded variograms on $\RR^d$},
it is almost exclusively the variogram family 
$\vario(\bm{x}-\bm{y})=\lVert (\bm{x} - \bm{y})/s \rVert^\alpha$, $s>0$, 
$\alpha \in (0,2)$ of fractional Brownian sheets (fBS) that is considered in 
applications. 

\subsection{Threshold stopping based on Gaussian mixtures}
The first attempts of simulating a Brown-Resnick process $Z$ were based on 
threshold stopping with a log-Gaussian spectral process $V$ as in 
\eqref{eq:spec-BR} satisfying $W(\bm{x}_o)=0$ for some $\bm{x}_o\in K$. Typically, the
origin $\bm{o} \in \RR^d$ belongs to the simulation domain $K$ and 
$\bm{x}_o=\bm{o}$. We refer to such spectral processes $V=V^{\text{(orig)}}$ as
the \emph{original} spectral representation of the Brown-Resnick process $Z$. 
Since log-Gaussian processes do not have an almost sure upper bound, such a 
threshold stopping procedure based on $V^{\text{(orig)}}$ cannot be exact. 
Instead, a typical phenomenon is that the threshold stopping procedure works 
well in a neighbourhood of $\bm{x}_o$, where the variance of the underlying 
Gaussian process is small, but a simulation bias appears in those parts of the
domain where the variance is large.
To avoid this phenomenon, \cite{oks12} introduced a uniformly distributed 
\emph{random shift} in the spectral process
\begin{align}\label{eq:randomshift}
V^{\text{(shift)}}(\bm{x})=V^{\text{(orig)}}(\bm{x}-\bm{S}), \quad \bm{x} \in K, \quad \bm{S} \sim \text{Unif}(K).
\end{align}
Note that the superimposed homogeneity comes however at the cost of increasing
the variance of the spectral process even further in most situations. 

More recently, \cite{os18} explain how a variance reduction of the Gaussian 
process $W$ in \eqref{eq:spec-BR} with fixed variogram $\vario$ can lead to 
faster and more accurate simulations based on the threshold stopping procedure.
Specifically, when $W$ is chosen such that the maximal variance 
$\sup_{\bm{x} \in K} \Var(W(\bm{x}))$ is minimal among all Gaussian processes 
on $K$ with variogram $\vario$, we call the corresponding spectral process in
\eqref{eq:spec-BR} \emph{minimal variance} spectral process 
$V^{\text{(minvar)}}$. Table~\ref{table:minvar} lists the corresponding minimal
Gaussian processes on the $d$-dimensional hyperrectangle 
${[-\bm{R},\bm{R}] \subset \RR^d}$ for the variogram family 
$\vario(\bm{h})=\lVert \bm{h}/s \rVert^\alpha$, $\alpha \in (0,2)$, $s>0$. For 
$d \geq 2$ and $\alpha \in (0,1)$ the minimal representation is unknown. 
However, also in this case the modified Gaussian process 
\begin{align*}
W(\bm{x})=W^{\text{(orig)}}(\bm{x}) - 2^{-d} \sum_{\bm{v} \in \Ex([-\bm{R},\bm{R}])} W^{\text{(orig)}}(\bm{v}),
\end{align*}
where $\Ex([-\bm{R},\bm{R}])$ is the set of vertices of the simulation domain 
$[-\bm{R},\bm{R}] \subset \RR^d$, has a substantially reduced maximal variance 
compared to the original process $W^{\text{(orig)}}$ and should be preferred.

\begin{table}
	\caption{\small  
		Gaussian process $W^{\text{\normalfont (minvar)}}$ with variogram $\vario(\bm{h})=\lVert \bm{h}/s \rVert^\alpha$, $\alpha \in (0,2)$, 
		$s>0$ that minimize the maximal variance on the hyperrectangle $[-\bm{R},\bm{R}] \subset \RR^d$.
		The process is either given by its covariance $C^{\text{\normalfont (minvar)}}$
		or built from the original representation $W^{\text{\normalfont (orig)}}$. 
	}
	\label{table:minvar}
	\vspace*{1mm}
	\centering
	\tabcolsep2mm
	\begin{tabular}{l|c|c}
		\hline
                \rowcolor{gray!30}
		& $d=1$ & $d \geq 2$\\
		\hline 
		$\alpha \in (0,1]$
		& 
		\begin{minipage}{9.4cm}
			\centering
			\vspace*{2mm}
			${C^{\text{(minvar)}}(x,y)={2^{-1}s^{-\alpha}}\big({\Gamma\big(\frac{2-\alpha}{2}\big)\Gamma\big(\frac{1+\alpha}{2}\big)}{\Gamma\big(\frac{1}{2}\big)^{-1}}-\lvert x-y\rvert^\alpha \big)}$ 
			\vspace*{2mm}
		\end{minipage}
		& 
		\begin{minipage}{2cm}
			\centering
			\vspace*{3mm}
			unknown
			\vspace*{3mm}
		\end{minipage}
		\\
		\hline & \multicolumn{2}{c}{} \\[-3mm]
		$\alpha \in [1,2)$ &
		\multicolumn{2}{c}{
			\begin{minipage}{11cm}
				\centering
				\vspace*{2mm}
				$
				W^{\text{(minvar)}}(\bm{x})
				= W^{\text{(orig)}}(\bm{x}) - 2^{-d} \sum_{\bm{v} \in \Ex([-\bm{R},\bm{R}])} W^{\text{(orig)}}(\bm{v})
				$
				\vspace*{2mm}
			\end{minipage}
		}\\
		\hline
	\end{tabular}
\end{table}

\subsection{Record breakers} \label{subsec:record}

An exact simulation procedure for Brown-Resnick processes, which is 
specifically tailored to this class, is the record breakers approach by 
\citet{lbdm16}. It is based on the original spectral representation 
\eqref{eq:spec-repr} with $V$ being a log-Gaussian random field as in 
\eqref{eq:spec-BR}. Let $a, c \in (0,1)$ and $C>0$ and consider the three 
random times 
\begin{align*}
N_X ={}& \sup\{n \in \NN: \, \max_{i=1,\ldots,N} V_n(\bm x_i) > n^a \exp(C) \}, \\
N_\Gamma ={}& \sup\{n \in \NN: \, \Gamma_n \leq cn \}, \\
N_a ={}& \sup\Big\{n \in \NN: \, n c \leq \frac{\Gamma_1 n^a \exp(C)}{\min_{i=1,\ldots,N} V_1(\bm x_i)} \Big\} 
= \Big\lfloor\Big(\frac{\Gamma_1 \exp(C)}{c \min_{i=1,\ldots,N} V_1(\bm x_i)}\Big)^{1/(1-a)}\Big\rfloor.
\end{align*}
From the definition of $N_X$, $N_\Gamma$ and $N_a$, it is easily checked that
\begin{align*}
\{Z^{(\max\{N_X,N_\Gamma,N_a\})}(\bm{x})\}_{\bm{x} \in K} = \{Z^{(\infty)}(\bm{x})\}_{\bm{x} \in K} \quad \text{ in distribution.}
\end{align*}
While $N_a$ can be obtained directly from $\Gamma_1$ and $V_1$, simulation of 
the random times $N_X$ and $N_\Gamma$ is more sophisticated. For the simulation
of $N_\Gamma$, \citet{lbdm16} make use of the fact that $\{\Gamma_n - cn\}_{n \in \NN}$
is a random walk with positive drift. An algorithm is provided that 
subsequently samples the times when the random walk crosses zero. Due to its 
positive drift the process will finally stay positive. To obtain $N_X$, all 
so-called \emph{record-breaking} times $\eta_1 < \eta_2 < \ldots$, i.e.\ all
$\eta \in \NN$ such that
\begin{align*} \max_{i=1,\ldots,N} V_\eta(\bm x_i) > n^a \exp(C), \end{align*}
are subsequently simulated. 
The finiteness of all moments of $\max_{i=1,\ldots,N} V(\bm x_i)$ implies
that the number of record-breaking times is almost surely finite. 
Consequently, the record-breakers algorithm requires an almost surely finite
number of simulations of log-Gaussian processes $V_i$ to obtain a realization 
of the Brown-Resnick process $Z$.

\begin{remark}
  On a practical note, \cite{lbdm16} also provide guidance on how to choose the parameters $a,c \in (0,1)$ and $C > 0$. However, there is an additional parameter $\delta \in (0,1)$ involved, where the practical implications of this choice and its interplay with the other parameters are still open.
\end{remark}

\subsection{Generic approaches}
\label{sec:BRgeneric}
Besides these approaches that are rather specific to the class of 
Brown-Resnick processes, general procedures such as simulation based on 
normalized spectral processes (Section~\ref{sec:normalized}) or the extremal
functions  approach (Section~\ref{sec:EF}) can be used, cf.\ also Figure~\ref{fig:generic-summary}.
Considering a finite domain $K=\{\bm x_1,\ldots,\bm x_N\}$, the distribution $P_k$ from \eqref{eq:measure-Pk} is the distribution of the stochastic process $\{\exp(W^{\text{(orig)}}(\bm x - \bm x_k))\}_{{\bm x} \in K}$ \citep{deo16}.
Therefore, the extremal functions approach as in Algorithm~\ref{sec:survey}.\ref{alg:ef}  is readily available for Brown-Resnick processes. Further, this implies that the sum-normalized spectral process is of the form
\begin{align} \label{eq:dm}
V^{\lVert\cdot\rVert_1}(\bm x) = N \frac{\exp(W^{\text{(orig)}}(\bm x - \bm S))}{\sum_{k=1}^N \exp(W^{\text{(orig)}}(\bm x - \bm x_k))}, \quad \bm x \in K,
\end{align}
where ${\bm S}$ is uniformly distributed on $K = \{\bm x_1,\ldots,\bm x_N\}$ 
and independent of $W^{\text{(orig)}}$, which has been demonstrated already earlier in \citet{dm15}. Finally, the simple representation \eqref{eq:dm} of the sum-normalized spectral functions can also be used to simulate the sup-normalized spectral functions via rejection sampling \citep{defondeville-davison-18}. Modifications of the last approach to reduce the rejection rate and alternative MCMC procedures have recently been proposed  by \citet{oss19}.

\begin{remark}
Furthermore, \citet{ho-dombry-17} show that, conditional on the component
$k^* \in \{1,\ldots,N\}$ where the maximum is assumed, the distribution of 
the vector 
$(V^{\lVert\cdot\rVert_\infty}(\bm x_k))_{k=1,\ldots,N} / V^{\lVert\cdot\rVert_\infty}(\bm x_{k^*})$ equals the 
distribution of a log-Gaussian vector conditional on not exceeding one -- a fact that can be used for its simulation. 
As efficient sampling from such a conditional distribution is not straightforward in high dimension and the calculation of the distribution of $k^*$ involves the inversion of several matrices of sizes $N \times N$ and $(N-1) \times (N-1)$ as well as evaluations of $(N-1)$-dimensional Gaussian distribution functions, however, this approach is limited to small or moderate $N$ in practice.
\end{remark}

\section{Desirable properties} \label{sec:eff-acc}

Simulation algorithms are supposed to provide results in an efficient and 
accurate way. In this section, we investigate the performance of the algorithms
above with respect to these two aspects from a theoretical angle. While the efficiency of an algorithm  can be characterized in terms of its computational complexity, we measure its accuracy in terms of distributional properties of the approximation error. The proofs for this section are postponed to the Appendix~\ref{app:proofs}.

\subsection{Efficiency} \label{sec:eff}

Apart from the record breakers approach (Section~\ref{subsec:record}), which is
tailored to the class of Brown-Resnick processes, all other simulation 
algorithms reviewed in this manuscript are based on the simulation of standard 
Poisson points $\Gamma_j$ on the positive real line and associated spectral 
processes $V_j$ on the simulation domain $K$ only. Hence, if $c_V(K)$ denotes 
the computational complexity of simulating a single spectral process $V$ on the
domain $K$ and $N_V(K)$ denotes the total number of spectral processes $V_j$ to 
be simulated in such a simulation algorithm, then the law of the product 
$N_V(K) \cdot c_V(K)$ describes the computational complexity of the entire 
procedure. As the second factor $c_V(K)$ inevitably depends on the simulation 
technique used to generate samples from the specific spectral function $V$, we 
focus our analysis mainly on the first factor $N_V(K)$ henceforth.

In case of the Threshold Stopping Algorithm~\ref{sec:survey}.\ref{alg:ts}, the 
random number $N_V(K)$ of spectral processes to be simulated coincides with the 
stopping time $T=T_{\tau}$ from \eqref{eq:def-stop}. Its expected value can be 
bounded as follows.   

\begin{proposition} 
	\label{prop:expected_stopping}
	\begin{enumerate}[label={(\alph*)},wide=0mm]
		\item The expected stopping time of the Threshold Stopping 
		Algorithm~\ref{sec:survey}.\ref{alg:ts} is bounded from below by
		\begin{align} \label{eq:expectedtime}
		\EE(N_V(K)) =\EE(T_{\tau}) \geq \tau \, \EE \, \Big\{ {1}/{\inf_{\bm{x} \in K} Z(\bm{x})} \Big\}.
		\end{align}
		\item Equality in \eqref{eq:expectedtime} holds if and only if 
		$\sup_{\bm x \in K} V(\bm x) \leq \tau$ almost surely.
	\end{enumerate}
\end{proposition}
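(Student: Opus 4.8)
The plan is to turn the expected stopping time into a single Campbell--Mecke integral over the driving Poisson process and then bound the integrand pointwise in the arrival coordinate.

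First I would record that the stopping rule is monotone: writing $M_j=\inf_{\bm x\in K}Z^{(j)}(\bm x)$, the sequence $M_j$ is nondecreasing in $j$ while $\Gamma_{j+1}$ increases, so once the event $\{\Gamma_{j+1}^{-1}\tau<M_j\}$ holds it continues to hold. Consequently $\{T_\tau\ge j\}=\{\Gamma_j\le\tau/M_{j-1}\}$, these events are nested in $j$, and the count of indices with $\Gamma_j\le\tau/M_{j-1}$ equals $T_\tau$. In point-process language this reads
\[
T_\tau=\sum_{(\gamma,v)\in\Phi}\mathbf 1\{\gamma\,\underline M_\gamma\le\tau\},\qquad \underline M_\gamma=\inf_{\bm x\in K}\max\{\gamma'^{-1}v'(\bm x):(\gamma',v')\in\Phi,\ \gamma'<\gamma\},
\]
where $\Phi=\{(\Gamma_j,V_j)\}$ and $\underline M_\gamma$ is the infimum of the partial maximum built only from points arriving strictly before $\gamma$.

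Next I would apply the Slivnyak--Mecke formula to this sum (all terms being nonnegative, Tonelli removes any integrability worry). The key simplification is that inserting an extra point at $(r,v)$ does not change $\underline M_r$, since the latter only sees points with arrival strictly less than $r$; moreover the resulting integrand does not depend on the mark $v$. This gives the clean identity $\EE(T_\tau)=\int_0^\infty\PP\{r\,\underline M_r\le\tau\}\,\dd r$, with $\underline M_r$ now formed from the original process. Because $\underline M_r\le M_\infty:=\inf_{\bm x\in K}Z^{(\infty)}(\bm x)$ and $Z^{(\infty)}\stackrel{d}{=}Z$, I would bound the integrand below by $\PP\{rM_\infty\le\tau\}$ and substitute $u=\tau/r$ to get $\int_0^\infty\PP\{rM_\infty\le\tau\}\,\dd r=\tau\,\EE(1/M_\infty)=\tau\,\EE\{1/\inf_{\bm x\in K}Z(\bm x)\}$, which is exactly part~(a).

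For part~(b) the excess is $\EE(T_\tau)-\tau\,\EE\{1/\inf Z\}=\int_0^\infty\PP\{\underline M_r\le\tau/r<M_\infty\}\,\dd r$. For the "if" direction I would note that when $\sup_{\bm x\in K}V(\bm x)\le\tau$ almost surely, any point arriving after $r$ contributes at most $r^{-1}\tau=\tau/r$ uniformly in $\bm x$; hence $\underline M_r\le\tau/r$ already forces $M_\infty\le\tau/r$, the two events under the integral coincide for every $r$, and equality holds. For the "only if" direction I would argue by contraposition: each realisation with $\underline M_{\tau/M_\infty}<M_\infty$ contributes the whole interval $(\tau/M_\infty,\rho]$, $\rho=\sup\{r:r\underline M_r\le\tau\}$, of strictly positive length to the integral, so it suffices to prove $\PP\{\underline M_{\tau/M_\infty}<M_\infty\}>0$ whenever $\PP\{\sup_{\bm x\in K}V(\bm x)>\tau\}>0$. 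The driving mechanism is that only a function with $\sup_{\bm x}V>\tau$ can raise the global infimum of $Z$ at its argument-of-infimum while carrying an arrival time $\ge\tau/M_\infty$.

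The hard part will be this final positivity statement. I would construct, with positive probability, a configuration in which a single "tall" spectral function (one realising $\sup_{\bm x}V>\tau$) simultaneously (i)~solely determines the value of $Z$ at the location where the global infimum is attained and (ii)~carries an arrival time exceeding $\tau/M_\infty$, while all remaining points keep $Z$ strictly above that infimum elsewhere. The natural route is to insert one such point via Slivnyak--Mecke, use sample-continuity and compactness of $K$ to localise the infimum to a small neighbourhood on which $V>\tau$ (available since $\PP\{\sup_{\bm x}V>\tau\}>0$), and control the remaining, independent part of $\Phi$ through Poisson void and occupation probabilities. Balancing requirement~(i) against requirement~(ii)---choosing the inserted arrival time large enough to exceed $\tau/M_\infty$ yet small enough that the function still dominates at the arginf---is the delicate step, and I expect it to be the main obstacle.
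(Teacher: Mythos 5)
Your part (a) is correct and is, in substance, the paper's own argument: the paper likewise writes $\EE(T_{\tau})$ as the expected number of Poisson points whose arrival satisfies the running-infimum condition, bounds the running maximum (over points arriving earlier) by the maximum over $\Pi\setminus\{(u,v)\}$, and then applies the Slivnyak--Mecke equation, using that the remaining points reproduce $Z$ in law. You merely work in the arrival coordinate $\gamma=u^{-1}$ and swap the order of operations (Mecke first, then the pointwise bound $\underline M_r\le M_\infty$ inside the integral), which is equivalent. Your ``if'' half of (b) is also correct and in fact more detailed than the paper, which disposes of the whole equivalence with the single clause ``with equality if and only if $\sup_{\bm x\in K}V(\bm x)\le\tau$ almost surely.'' One small technicality: to get an interval of strictly positive length from a realisation with $\underline M_{\tau/M_\infty}<M_\infty$ you need $\rho>\tau/M_\infty$ strictly, which requires observing that $r\mapsto r\,\underline M_r$ is left-continuous and nondecreasing with jumps only at arrival times, and that almost surely no arrival occurs exactly at $\tau/M_\infty$; this is routine but should be said.

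The genuine gap is the ``only if'' half of (b). Your reduction to the positivity statement $\PP\{\underline M_{\tau/M_\infty}<M_\infty\}>0$ whenever $\PP\{\sup_{\bm x\in K}V(\bm x)>\tau\}>0$ is correct, but that statement is only sketched, and you rightly flag it as the delicate step. To see why it is delicate: by sample continuity, a tall spectral function exceeds $\tau$ only on some neighbourhood $B$ of a location $x^*$ with positive probability, and if $V\le\tau$ off $B$ almost surely, then every $x\notin B$ satisfies $Z(x)\le\tau/\Gamma_1$, whence $\tau/M_\infty\ge\Gamma_1$ always; so the simplest construction (``no arrival before the critical time $\tau/M_\infty$,'' which would give $\underline M_{\tau/M_\infty}=0$) is impossible in general, and one is forced into exactly the balancing act you describe: arrange early functions that place the eventual infimum $M_\infty$ away from $B$ while a neighbourhood of $x^*$ still lags below $M_\infty$ at time $\tau/M_\infty$, to be lifted afterwards by a tall function -- whose contribution after time $\tau/M_\infty$ can reach level $M_\infty$ only if it exceeds $\tau$ there, which is precisely what the hypothesis supplies. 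So your plan is the right one, but as written the proposal does not constitute a proof of (b) ``only if.'' For calibration: the paper's own proof does not supply this argument either -- it asserts the equality criterion in one line inside the display -- so your account is no less complete than the published one; it simply makes visible a step the paper leaves implicit.
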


The lower bound in \eqref{eq:expectedtime} is finite for sample-continuous $Z$
by Theorem 2.2 in \citet{de12}. 
It should be relatively sharp
in most practically relevant situations,  
while an ad-hoc
rough upper bound is given by
\begin{align*} 
\EE(N_V(K)) = \EE(T_{\tau}) \leq 1 + \EE\Big\{ {\tau}/{\inf_{\bm{x} \in K} V(\bm{x})} -1 \Big\}_+. 
\end{align*}
Another interpretation of Proposition~\ref{prop:expected_stopping}~(b) is that
equality in \eqref{eq:expectedtime} holds if and only if the threshold stopping
algorithm produces exact samples of the max-stable process $Z$, cf.\ 
Section~\ref{sec:thresholdstopping}. Equality in \eqref{eq:expectedtime} in 
this situation was already proved by a different technique in \citet{ord16}.
Naturally, the following respective results for the normalized spectral representations 
(Section~\ref{eq:normalized}) can be recovered as special cases.  

\begin{corollary} \label{cor:expected_stopping}
	\begin{enumerate}[label={(\alph*)},wide=0mm]
		\item {\normalfont (\citeauthor{osz18} \citeyear{osz18})\textbf{.}}
		The expected stopping time of the Threshold Stopping 
		Algorithm~\ref{sec:survey}.\ref{alg:ts} with sup-normalized representation 
		$V=V^{\lVert\cdot\rVert_\infty}$ and threshold 
		$\tau=\theta_{\lVert\cdot\rVert_\infty}$ is 
		\begin{align*}
		\EE (N_{V^{\lVert \cdot \rVert_\infty}}(K)) = \EE \, T_{\theta_{\lVert \cdot \rVert_\infty}} = \theta_{\lVert\cdot\rVert_\infty} \, \EE \, \Big\{ {1}/{\inf_{\bm{x} \in K} Z(\bm{x})} \Big\}.
		\end{align*} 
		\item {\normalfont (\citeauthor{deo16} \citeyear{deo16})\textbf{.}} 
		The expected stopping time of the Threshold Stopping 
		Algorithm~\ref{sec:survey}.\ref{alg:ts} with sum-normalized representation 
		$V=V^{\lVert\cdot\rVert_1}$ and threshold $\tau=N$ is 
		\begin{align*}
		\EE(N_{V^{\lVert \cdot \rVert_1}}(K)) = \EE \, T_N = N \, \EE \, \Big\{ {1}/{\inf_{\bm{x} \in K} Z(\bm{x})} \Big\}.
		\end{align*}
	\end{enumerate}
\end{corollary}

An interesting observation is that the expressions for $\EE(N_V(K)) = \EE \,T_\tau$ above depend
on the law of the spectral process $V$ used only via the law of the resulting
max-stable process $Z$. In particular, if $V$ is any spectral process for $Z$, 
the constant
\begin{align} \label{eq:extremal-coefficient}
\theta_{\lVert\cdot\rVert_\infty} = \EE \, \Big\{ \sup_{\bm{x} \in K} V(\bm{x}) \Big\} 
= -\log \PP\Big\{\sup_{x \in K} Z(x) \leq 1\Big\}
\end{align} 
is usually known as \emph{extremal coefficient} of $Z$ on $K$. For 
$K=\{\bm{x}_1,\ldots,\bm{x}_N\}$ it ranges between $1$ and $N$ and can be 
interpreted as the effective number of independent variables in the set 
$\{Z(\bm x_1),\ldots,Z(\bm x_N)\}$. In view of 
Corollary~\ref{cor:expected_stopping}, being able to effectively simulate from 
a sup-normalized spectral process is therefore a worthwhile endeavor. What is
however unclear in this general setting, is how the computational complexities
$c_{V^{\lVert\cdot\rVert_1}}(K)$ and $c_{V^{\lVert\cdot\rVert_\infty}}(K)$ of 
obtaining a single realization of either $V^{\lVert\cdot\rVert_1}$ or 
$V^{\lVert\cdot\rVert_\infty}$ relate to one another. {Here, a more 
	effective simulation technique for $V^{\lVert\cdot\rVert_1}$ rather than for
	$V^{\lVert\cdot\rVert_\infty}$ may outweigh
        the reduction of the factor 
	$\EE(N_V(K))=\EE(T_\tau)$ by using $V=V^{\lVert\cdot\rVert_\infty}$ instead of 
	$V=V^{\lVert\cdot\rVert_1}$}, see Section~\ref{sec:BRgeneric} for related 
references for the case of Brown-Resnick processes.
This is a general trade-off associated with the choice of the spectral process $V$ for threshold stopping algorithms. Should it be easy to simulate $Z$ from $V$ (low $N_V(K)$) or should be easy to simulate $V$ itself (low $c_V(K)$)? For $V=V^{\lVert\cdot\rVert_1}$ and $V=V^{\lVert\cdot\rVert_\infty}$ we can at least trace $\EE(N_V(K))$ analytically as just discussed.

What is more, \citet{deo16} show that the expected number of simulated spectral
processes in the Extremal Functions Algorithm~\ref{sec:survey}.\ref{alg:ef} 
neither depends on the law of $Z$ nor on the geometry of the domain $K$.

\begin{proposition}[\citealt{deo16}]
	\label{prop:efficiencyEF}
	The expected number of spectral processes to be simulated in the Extremal 
	Functions Algorithm~\ref{sec:survey}.\ref{alg:ef} in order to obtain an exact
	sample of $Z$ on the set $K = \{\bm x_1,\ldots, \bm x_N\}$ equals $N$, i.e.\
	$\EE \, N_V(K) = N$ for this algorithm.
\end{proposition}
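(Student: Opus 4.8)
The plan is to write the total count as $N_V(K)=\sum_{\ell=1}^N M_\ell$, where $M_\ell$ is the number of spectral functions drawn during the $\ell$-th pass of the outer loop of Algorithm~\ref{sec:survey}.\ref{alg:ef} (the number of executions of \emph{simulate $v\sim P_\ell$}), and to prove the sharper statement $\EE M_\ell=1$ for every $\ell$; summing then yields $\EE N_V(K)=N$. First I would fix $\ell$ and condition on the history $\mathcal{H}_{\ell-1}=(\varphi^{(1)}_+,\dots,\varphi^{(\ell-1)}_+)$. This history fixes the records $z(\bm{x}_k)=Z(\bm{x}_k)=:m_k$ for $k<\ell$ (exactness of $Z^{(\ell-1)}_+$ on $\bm{x}_1,\dots,\bm{x}_{\ell-1}$, already noted in the text) and the current threshold $r_\ell:=z(\bm{x}_\ell)=\max_{k<\ell}\varphi^{(k)}_+(\bm{x}_\ell)$.

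The $\ell$-th pass inspects the atoms of a fresh Poisson process $\tilde\Phi$ with the same intensity $\mu$ as $\Phi$ in decreasing order of their $\bm{x}_\ell$-coordinate (the values $\Gamma^{-1}$ decrease along the inner loop), drawing one $v\sim P_\ell$ per atom whose $\bm{x}_\ell$-value is at least $r_\ell$, and stopping at the first \emph{non-conflicting} atom, i.e.\ the first $\varphi$ with $\varphi(\bm{x}_k)<m_k$ for all $k<\ell$. Thus an atom $\varphi$ is drawn precisely when $\varphi(\bm{x}_\ell)\ge r_\ell$ and every atom above it (larger $\bm{x}_\ell$-value) is conflicting. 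Applying the Mecke/Campbell formula for Poisson processes, together with the fact that the $\bm{x}_\ell$-marginal intensity of $\mu$ has density $t^{-2}$ (since $\EE V(\bm{x}_\ell)=1$), I obtain
\[
 \EE\!\left[M_\ell \mid \mathcal{H}_{\ell-1}\right]=\int_{r_\ell}^{\infty} e^{-G(t)}\,t^{-2}\,\dd t,\qquad G(t)=\mu\big(\{\varphi:\varphi(\bm{x}_\ell)>t,\ \varphi(\bm{x}_k)<m_k\ \forall k<\ell\}\big),
\]
where $G(t)$ is the $\mu$-mass of non-conflicting functions exceeding level $t$ at $\bm{x}_\ell$.

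To finish I would split the integrand through $-G'(t)=q(t)\,t^{-2}$, with $q(t)$ the conditional probability that a level-$t$ atom is non-conflicting. The piece carrying $q$ equals $\dd\big(e^{-G(t)}\big)$ and integrates to $1-e^{-G(r_\ell)}=\PP(\text{acceptance}\mid\mathcal{H}_{\ell-1})$, while the complementary piece is the expected number of rejected (conflicting) draws. It then remains to average over $\mathcal{H}_{\ell-1}$ and show the two contributions combine to $1$. The key structural input is that the extremal-function law weights the record configuration by a factor of the form $e^{-1/(\cdot)}$ against $\mu$ (for $\ell=2$ one has $\PP(\varphi^{(1)}_+\in\dd\varphi)=e^{-1/\varphi(\bm{x}_1)}\,\mu(\dd\varphi)$); a single integration by parts in the relevant record variable then turns the rejection integral into exactly the complementary acceptance mass, so that the total collapses to $1$ \emph{irrespective of $\mu$} --- hence irrespective of the law of $Z$ and of the geometry of $K$, which is the content of the proposition.

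I expect the main obstacle to lie in the averaging for a general step $\ell\ge 3$: there the threshold $r_\ell=\max_{k<\ell}\varphi^{(k)}_+(\bm{x}_\ell)$ is a maximum over the previously discovered extremal functions, so the joint law of $(m_1,\dots,m_{\ell-1},r_\ell)$ is involved and the one-line integration by parts that settles $\ell=2$ must be upgraded to a careful multivariate computation, or replaced by an induction on $\ell$ that exploits the Markov structure of the construction (conditionally on $\varphi^{(1)}_+,\dots,\varphi^{(\ell-1)}_+$, the remaining admissible atoms of $\Phi$ again form a Poisson process with intensity $\mu$ restricted to the non-conflicting region). The cancellation should nonetheless persist, as the extreme sub-cases of independent margins and of fully coincident sites already give $\EE M_\ell=1$ exactly.
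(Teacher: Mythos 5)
Your reduction to per-pass counts $M_\ell$ with $N_V(K)=\sum_{\ell=1}^N M_\ell$, the Mecke computation, and the resulting formula $\EE[M_\ell\mid\mathcal{H}_{\ell-1}]=\int_{r_\ell}^\infty e^{-G(t)}\,t^{-2}\,\dd t$ are all correct, and the sharper claim $\EE M_\ell=1$ is indeed true; this is also the decomposition underlying the original argument (note the present paper gives no proof of this proposition at all -- it is quoted from \citet{deo16}, and the appendix only proves the other propositions). However, your proof is genuinely incomplete at the decisive step: the acceptance/rejection split plus ``integration by parts in the record variable'' is only carried out heuristically for $\ell=2$, and you explicitly concede that the multivariate averaging over $\mathcal{H}_{\ell-1}$ for $\ell\ge 3$ is an unresolved obstacle, asserting rather than proving that ``the cancellation should persist.'' Checking extreme sub-cases does not substitute for the argument, and the route you sketch (an explicit density of the history against products of $\mu$, then induction) would be genuinely laborious -- note in particular that the cancellation \emph{cannot} hold conditionally: in the fully dependent case $\EE[M_2\mid\mathcal{H}_1]=1/m_1\neq 1$, so no pointwise-in-history identity is available and everything hinges on how one averages.

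The missing idea that closes the gap in one line, and which is essentially how \citet{deo16} conclude, is to recognize $e^{-G(t)}$ probabilistically rather than to integrate it by parts: since the pass ends with $Z(\bm x_\ell)=\max(r_\ell,\zeta^*)$, where $\zeta^*$ is the largest non-conflicting fresh atom, one has for $t\ge r_\ell$ that $e^{-G(t)}=\PP\{Z(\bm x_\ell)\le t\mid \mathcal{H}_{\ell-1}\}$, whence by Fubini
\begin{align*}
\EE[M_\ell\mid\mathcal{H}_{\ell-1}]
=\int_{r_\ell}^\infty \PP\{Z(\bm x_\ell)\le t\mid\mathcal{H}_{\ell-1}\}\,t^{-2}\,\dd t
=\EE\Big\{{1}/{Z(\bm x_\ell)}\;\Big|\;\mathcal{H}_{\ell-1}\Big\},
\end{align*}
using $Z(\bm x_\ell)\ge r_\ell$ almost surely. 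Taking expectations and invoking the exactness of the algorithm at $\bm x_\ell$ (so that $Z(\bm x_\ell)$ is standard Fr\'echet and $1/Z(\bm x_\ell)\sim\mathrm{Exp}(1)$) yields $\EE M_\ell=\EE\{1/Z(\bm x_\ell)\}=1$, and summing gives $\EE N_V(K)=N$. This dispenses entirely with the joint law of $(m_1,\dots,m_{\ell-1},r_\ell)$, the density of the extremal-function configuration, and the induction you anticipate; it also makes transparent why the answer depends neither on the law of $Z$ nor on the geometry of $K$ -- only the unit Fr\'echet margin enters.
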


Table~\ref{table:efficiency} 
summarizes these findings on the efficiency of the three generic exact 
simulation algorithms considered in this section.  Since the max-stable process
$Z$ to be simulated has standard Fr\'echet margins, we have
\begin{align} \label{eq:infconstantgeqone}
\EE \, \Big\{ {1}/{\inf_{\bm{x} \in K} Z(\bm{x})} \Big\} \geq 1, 
\end{align} 
and equality holds if and only if $Z$ is almost surely constant on $K$. Hence, 
apart from this exceptional case, the expected number of simulated spectral 
processes $\EE(N_V(K))$ for the extremal functions approach is always smaller
than the corresponding number for the sum-normalized approach. According to the 
results in \citet{deo16} (cf.\ also Section~\ref{sec:summary-generic}), the spectral processes involved in the two approaches are very closely related to each other,  i.e.\ their complexities $c_V(K)$ are almost identical. Thus, in terms of computational complexity, the extremal 
functions approach is always preferable to the sum-normalized approach if exact
samples are desired.

\begin{table}
	\centering
	\caption{\small 
		The expected number $\EE(N_V(K))$ of spectral functions $V_j$ to be simulated 
		to obtain an exact sample of a max-stable process $Z$ on a set 
		$K = \{\bm x_1,\ldots, \bm x_N\}$ for three generic simulation algorithms. Each method relies on the ability to simulate from specific spectral functions $V$.
	}
	\label{table:efficiency}

	{\small
\tabcolsep1.5mm
		\begin{tabular}{llcc}
			\toprule
			\multicolumn{2}{l}{\bf Method/Reference} 
			& Spectral fcts.~$V$ 
			& {$\EE(N_V(K))$}
			\\
			\midrule
			\multirow{3}{*}{SN}
			& Sup-normalized threshold stopping 
			& \multirow{3}{*}{$V^{\lVert\cdot\rVert_\infty}$}
			&  \multirow{3}{*}{$\theta_{\lVert\cdot\rVert_\infty} \, \EE \, \big\{ {1}/{\inf_{\bm{x} \in K} Z(\bm{x})} \big\}$} \\
			& (\cite{osz18},\\ & Section~\ref{sec:normalized}) \\[1mm]
			\multirow{3}{*}{DM}
			& Sum-normalized threshold stopping 
			& \multirow{3}{*}{$V^{\lVert\cdot\rVert_1}$}
			&  \multirow{3}{*}{$N \, \EE \, \big\{ {1}/{\inf_{\bm{x} \in K} Z(\bm{x})} \big\}$} \\
			& (\cite{dm15},\\ & Section~\ref{sec:normalized}) \\[1mm]
			\multirow{3}{*}{EF}
			& Extremal functions 
			& \multirow{3}{*}{$V^{(1)},\ldots,V^{(N)}$}
			&  \multirow{3}{*}{$N$} \\
			& (\cite{deo16},\\ & Section~\ref{sec:EF}) \\
			\bottomrule
		\end{tabular}
	}
\end{table}

\begin{remark} \label{rk:lbdm}
	For the record breakers approach, \cite{lbdm16} show that the expected number 
	$\EE(N_V(\{\bm x_1,\ldots, \bm x_N\}))$ of spectral processes 
	$V_j \sim V^{(\text{orig})}$ to be simulated in order to produce an exact 
	sample of a Brown-Resnick process $Z$ lies in $o(N^\eps)$ for any $\eps>0$. The
	result is however difficult to compare with Table~\ref{table:efficiency} as it 
	holds for fixed $K \supset \{\bm x_1,\ldots,\bm x_N\}$ only. For instance, the 
	corresponding result for the sup-normalized threshold stopping could be phrased
	as $\EE(N_V(\{\bm x_1,\ldots, \bm x_N\})) \in \mathcal{O}(1)$ for 
	$V=V^{\lVert\cdot\rVert_\infty}$. This exacerbates meaningful comparisons.
	
\end{remark}

\subsection{Accuracy}

While simulation via normalized spectral functions with appropriate thresholds
or the extremal functions approach produce exact samples from the distribution
of a max-stable process, these algorithms can be computationally expensive. 
Therefore, also the analysis of non-exact simulation algorithms is of interest. 
\vspace*{3mm}

\paragraph{Threshold stopping.}
Our main focus lies on the potentially non-exact Threshold Stopping 
Algorithm~\ref{sec:survey}.\ref{alg:ts} in what follows. As explained in 
Section~\ref{sec:thresholdstopping}, such an algorithm can be non-exact if the 
threshold $\tau$ is exceeded by the spectral process $V$ on $K$ with positive 
probability. Naturally, decreasing the threshold $\tau$ reduces the 
computational cost. But at the same time, the simulation is more likely to be
less accurate as well. To make this specific, let us define the 
\emph{simulation error} as the deviation of the resulting finite approximation
$Z^{(T_{\tau})}$ from the exact sample $Z=Z^{(\infty)}$. The following 
proposition provides a very general description of the distribution of the 
simulation error.

\begin{proposition} \label{prop:general-error}
	For any measurable function $f: C(K) \times K \to [0,\infty]$, we have
	\begin{align*}
	& \PP\big\{ |Z(\bm{x}) - Z^{(T_{\tau})}(\bm{x})| > f(Z^{(T_{\tau})},\bm{x}) \text{ for some } \bm{x} \in K\big\} \\
	={}& 1 - \EE_{Z^{(T_{\tau})}}\bigg\{\exp\bigg(-\EE_V\bigg\{\sup_{\bm{x} \in K} \frac{V(\bm{x})}{Z^{(T)}(\bm{x}) + f(Z^{(T_{\tau})},\bm{x})} 
	- \sup_{\bm{x} \in K} \frac{\tau}{Z^{(T_{\tau})}(\bm{x})} \bigg\}_+\bigg)\bigg\},
	\end{align*}
	where the spectral process $V$ and the stopped process $Z^{(T_{\tau})}$ are 
	stochastically independent.
\end{proposition}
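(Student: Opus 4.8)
The plan is to turn the two–sided deviation into a one–sided exceedance by the \emph{discarded} part of the spectral representation \eqref{eq:spec-repr}, and then to read off the answer from a Poisson void probability. First I would use that $Z=Z^{(\infty)}\ge Z^{(T_\tau)}$ pointwise and that $f\ge 0$, so that
\begin{align*}
\big\{\,|Z(\bm x)-Z^{(T_\tau)}(\bm x)|>f(Z^{(T_\tau)},\bm x)\ \text{for some}\ \bm x\in K\big\}
=\big\{\,Z(\bm x)>g(\bm x)\ \text{for some}\ \bm x\in K\big\},
\end{align*}
where I abbreviate $g(\bm x):=Z^{(T_\tau)}(\bm x)+f(Z^{(T_\tau)},\bm x)\ge Z^{(T_\tau)}(\bm x)$. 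Since each simulated function $\Gamma_j^{-1}V_j$ with $j\le T_\tau$ is dominated by $Z^{(T_\tau)}\le g$, this event can only be triggered by the discarded functions $\Gamma_j^{-1}V_j$ with $j>T_\tau$; hence it coincides with the event that the tail point process contains some $\varphi$ with $\sup_{\bm x\in K}\varphi(\bm x)/g(\bm x)>1$.

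The heart of the argument is to describe the conditional law of this tail given the output $Z^{(T_\tau)}$. I would argue that, conditionally on $Z^{(T_\tau)}$, the discarded functions form a Poisson process whose intensity is that of $\Phi=\{\varphi_j\}$ restricted to arrival times exceeding $\beta:=\sup_{\bm x\in K}\tau/Z^{(T_\tau)}(\bm x)=\tau/\inf_{\bm x\in K}Z^{(T_\tau)}(\bm x)$; equivalently, it is the image under $(\gamma,v)\mapsto\gamma^{-1}v$ of a Poisson process on $(\beta,\infty)\times C(K)$ with intensity $\dd\gamma$ times the law of $V$, and with fresh spectral marks independent of $Z^{(T_\tau)}$ (this is the independence asserted in the statement). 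To establish this I would condition on $\{T_\tau=n\}$ and invoke the strong Markov property of the unit–rate arrival process together with the lack of memory of the exponential inter–arrival times: the stopping rule \eqref{eq:def-stop} certifies $\Gamma_{n+1}>\beta$, and memorylessness should re–start the arrival clock at the level $\beta$, which is deterministic given $Z^{(T_\tau)}$, while the discarded marks remain untouched i.i.d.\ copies of $V$.

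Given this description, the result follows from the Poisson void probability: for a Poisson process with intensity $\nu$ on $C(K)$ one has $\PP\{\text{no }\varphi\text{ with }\varphi>g\text{ somewhere}\}=\exp(-\nu\{\varphi:\sup_{\bm x}\varphi(\bm x)/g(\bm x)>1\})$. Applying this conditionally on $Z^{(T_\tau)}$ and using the scaling $\varphi=\gamma^{-1}v$ as in \eqref{eq:extremal-coefficient}, the relevant intensity evaluates to
\begin{align*}
\int_\beta^\infty \PP\Big\{\sup_{\bm x\in K}\frac{V(\bm x)}{g(\bm x)}>\gamma\Big\}\,\dd\gamma
=\EE_V\Big\{\sup_{\bm x\in K}\frac{V(\bm x)}{g(\bm x)}-\beta\Big\}_+ .
\end{align*}
Thus $\PP\{Z\le g\text{ on }K\mid Z^{(T_\tau)}\}$ equals the exponential appearing in the statement, and taking the complementary probability and averaging over the law of $Z^{(T_\tau)}$ yields the claimed identity.

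The main obstacle I anticipate is precisely the conditional characterization in the second step. The stopping time \eqref{eq:def-stop} is not a stopping time for the ordered marks alone: it peeks one arrival ahead by comparing $\Gamma_{T_\tau+1}$ with $\inf_{\bm x}Z^{(T_\tau)}(\bm x)$. Making rigorous that this look–ahead, combined with memorylessness, resets the effective arrival threshold exactly at $\beta$ -- rather than at the last \emph{used} arrival time $\Gamma_{T_\tau}$ -- is the delicate point. I would treat it by decomposing over $\{T_\tau=n\}$, applying the strong Markov property at the fixed index $n$, and then verifying that the conditional intensity of the surviving (discarded) functions integrates exactly to the displayed expression.
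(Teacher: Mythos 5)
Your proposal follows the paper's proof essentially step for step: the paper likewise reduces the deviation event to an exceedance by points of the Poisson process $\Pi$ lying beyond the level determined by $Z^{(T_{\tau})}$ (your $\beta=\sup_{\bm x\in K}\tau/Z^{(T_\tau)}(\bm x)$, in the paper's $u=\Gamma^{-1}$ coordinates the region $u<\tau^{-1}\inf_{\bm x\in K}Z^{(T_{\tau})}(\bm x)$), asserts that conditionally on $Z^{(T_{\tau})}$ this residual restriction of $\Pi$ is Poisson with the original intensity $u^{-2}\,\dd u\,\PP\{V\in\dd v\}$, and evaluates exactly your void-probability integral, $\int u^{-2}\,\dd u$ between $\inf_{\bm x}(Z^{(T_\tau)}(\bm x)+f)/V(\bm x)$ and $\inf_{\bm x}Z^{(T_\tau)}(\bm x)/\tau$, to get $\EE_V\big\{\sup_{\bm x\in K}V(\bm x)/(Z^{(T_\tau)}(\bm x)+f)-\sup_{\bm x\in K}\tau/Z^{(T_\tau)}(\bm x)\big\}_+$ before averaging over $Z^{(T_{\tau})}$. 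The ``delicate point'' you flag -- rigorously justifying that the conditional law of the discarded points given $Z^{(T_{\tau})}$ resets at the level $\beta$ rather than at the last used arrival $\Gamma_{T_\tau}$ -- is precisely the step the paper itself disposes of in a single sentence (its opening claim that any condition on $Z^{(T_{\tau})}$ can be rewritten in terms of $\Pi$ restricted above that level), so your argument is the paper's argument, merely more candid about where the remaining work lies.
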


Specifically, by setting $f(Z,\bm{x})=\varepsilon$ or 
$f(Z,\bm{x})= \varepsilon Z(\bm x)$, Proposition \ref{prop:general-error} 
entails the probability that an absolute error of size larger then $\varepsilon$
occurs
\begin{align*}
\notag \mathcal{P}^{\text{\upshape (abs)}}_{\tau,\varepsilon}={}& \PP\Big\{ \sup_{\bm{x} \in K} |Z(\bm{x}) - Z^{(T_{\tau})}(\bm{x})| > \varepsilon \Big\} \\
={}& 1 - \EE_{Z^{(T_{\tau})}}\bigg\{\exp\bigg(-\EE_V\bigg\{\sup_{\bm{x} \in K} \frac{V(\bm{x})}{Z^{(T_{\tau})}(\bm{x}) + \varepsilon} - \sup_{\bm{x} \in K} \frac{\tau}{Z^{(T_{\tau})}(\bm{x})}\bigg\}_+\bigg)\bigg\},
\end{align*}
and the probability that a relative error of size larger than $\varepsilon$ 
occurs 
\begin{align*}
\notag
\mathcal{P}^{\text{\upshape (rel)}}_{\tau,\varepsilon}={}& \PP\bigg\{ \sup_{\bm{x} \in K} \frac{|Z(\bm{x}) - Z^{(T_{\tau})}(\bm{x})|}{Z^{(T_{\tau})}(\bm{x})} > \varepsilon \bigg\} \\
={}& 1 - \EE_{Z^{(T_{\tau})}}\bigg\{\exp\bigg(-\EE_V\bigg\{\sup_{\bm{x} \in K} \frac{V(\bm{x})}{(1+ \varepsilon) Z^{(T_{\tau})}(\bm{x})} - \sup_{\bm{x} \in K} \frac{\tau}{Z^{(T_{\tau})}(\bm{x})}\bigg\}_+\bigg)\bigg\}.
\end{align*}
Both error occurance probabilities are increasing as the error size $\eps$ goes
to zero. For $\eps=0$ they coincide with the probability that a simulation error occurs at all
\begin{align}
\notag \mathcal{P}_{\tau}={}& \PP\Big\{ \sup_{\bm{x} \in K} |Z(\bm{x}) - Z^{(T_{\tau})}(\bm{x})| > 0 \Big\} \\
={}& 1 - \EE_{Z^{(T_{\tau})}}\bigg\{\exp\bigg(-\EE_V\bigg\{\sup_{\bm{x} \in K} \frac{V(\bm{x})}{Z^{(T_{\tau})}(\bm{x})} - \sup_{\bm{x} \in K} \frac{\tau}{Z^{(T_{\tau})}(\bm{x})}\bigg\}_+\bigg)\bigg\},
\label{eq:err}
\end{align}
which can serve as a benchmark.
In the notation of  Section~\ref{sec:EF}, an approximation error occurs 
precisely when the finite approximation $Z^{(T_\tau)}$ does not involve all 
extremal functions $\varphi \in \Phi_+ =\{\varphi_+^{(1)},\ldots,\varphi_+^{(N)}\}$.
The situation gets worse, the larger the number of missing extremal functions
\begin{align*}
M_{\tau} = \bigg\lvert \bigg\{ \Gamma_j^{-1} V_j \in \Phi_+:\ \Gamma_j^{-1} \tau \leq \inf_{\bm{x} \in K} \max_{k < j} \Gamma_k^{-1} V_k(\bm{x})\bigg\} \bigg\rvert.
\end{align*}
is. The expected number of missing extremal functions $\EE(M_{\tau})$ is a 
natural upper bound for the error probability $\mathcal{P}_{\tau}$, i.e.\
$\mathcal{P}_{\tau}\leq \EE(M_{\tau})$.

\begin{proposition} \label{prop:expected_missing} 
	The expected number of missing extremal functions $\EE(M_{\tau})$ in the 
	finite approximation $Z^{(T_\tau)}$ of the max-stable random field $Z$ is bounded by
	\begin{align} \label{eq:new-bound}
	\EE(M_{\tau}) \leq \EE \,  \bigg\{ \sup_{\bm{x} \in K} \frac{V(\bm{x})}{Z(\bm{x})} -  \sup_{\bm{x} \in K} \frac \tau {Z(\bm{x})} \bigg\}_+
	\end{align}
	where the max-stable process $Z$ and the spectral process $V$ are stochastically
	independent.
\end{proposition}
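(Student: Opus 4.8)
The plan is to express the count $M_\tau$ as a sum of indicators over the points of the driving Poisson point process and then apply the Mecke (Slivnyak--Mecke) equation. Writing the underlying process as $\Phi = \{(\Gamma_j, V_j)\}_{j\in\NN}$ on $(0,\infty)\times C(K)$ with intensity measure $\dd r \otimes \PP\{V\in\dd v\}$, the definition of $M_\tau$ reads
\begin{align*}
M_\tau = \sum_{(\Gamma_j,V_j)\in\Phi} \mathbf{1}\{\Gamma_j^{-1}V_j \in \Phi_+\}\, \mathbf{1}\Big\{\Gamma_j^{-1}\tau \leq \inf_{\bm{x}\in K}\bigvee\nolimits_{k<j}\Gamma_k^{-1}V_k(\bm{x})\Big\}.
\end{align*}
The Mecke equation turns $\EE(M_\tau)$ into the integral over $(r,v)\in(0,\infty)\times C(K)$, against $\dd r\,\PP\{V\in\dd v\}$, of the probability that the added atom $(r,v)$ is simultaneously extremal and missing within the configuration $\Phi\cup\{(r,v)\}$.

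First I would simplify the two indicator conditions for the added atom, relaxing each in the direction that yields an upper bound. Writing $Z_\Phi(\bm{x})=\max_j \Gamma_j^{-1}V_j(\bm{x})$ for the maximum over $\Phi$ alone, being extremal in $\Phi\cup\{(r,v)\}$ means $r^{-1}v(\bm{x}_k) \geq Z_\Phi(\bm{x}_k)$ for some $k$, which implies $\sup_{\bm{x}\in K} r^{-1}v(\bm{x})/Z_\Phi(\bm{x}) \geq 1$. Since the partial maximum over arrivals before $r$ is dominated by $Z_\Phi$, the missing condition implies $\sup_{\bm{x}\in K} r^{-1}\tau/Z_\Phi(\bm{x}) \leq 1$. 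Hence
\begin{align*}
\EE(M_\tau) \leq \int_0^\infty \EE_V\,\EE_\Phi\bigg[\mathbf{1}\Big\{\sup_{\bm{x}\in K} \frac{r^{-1}V(\bm{x})}{Z_\Phi(\bm{x})} \geq 1\Big\}\,\mathbf{1}\Big\{\sup_{\bm{x}\in K} \frac{r^{-1}\tau}{Z_\Phi(\bm{x})} \leq 1\Big\}\bigg]\,\dd r.
\end{align*}

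The crucial structural point is that, after the Mecke step, the field $Z_\Phi$ generated by the remaining configuration has exactly the law of $Z$ and is stochastically independent of the mark $v$; identifying $v$ with $V$ and $Z_\Phi$ with an independent copy $Z$, and setting $A = \sup_{\bm{x}\in K} V(\bm{x})/Z(\bm{x})$ and $B = \sup_{\bm{x}\in K}\tau/Z(\bm{x})$, rescaling both conditions by $r$ gives
\begin{align*}
\EE(M_\tau) \leq \int_0^\infty \PP\{A \geq r,\ B \leq r\}\,\dd r = \EE\int_0^\infty \mathbf{1}\{B \leq r \leq A\}\,\dd r = \EE\,(A-B)_+
\end{align*}
by Tonelli, which is exactly the right-hand side of \eqref{eq:new-bound}. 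The hard part will be the bookkeeping around the added atom in the Mecke step: one must verify that the extremal and missing events for $(r,v)$ depend on $\Phi$ only through $Z_\Phi$ (respectively the pre-$r$ partial maximum), that both relaxations are genuine upper bounds, and that the independence of $Z_\Phi$ and $v$ furnished by the Mecke formula is correctly invoked. The closing manipulations --- rescaling by $r$ and the Tonelli interchange --- are then routine.
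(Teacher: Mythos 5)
Your proposal is correct and takes essentially the same route as the paper's proof: write $M_\tau$ as a count over the Poisson points, relax both the extremality and the missing condition so that they depend only on the configuration with the atom removed, apply the Slivnyak--Mecke equation to replace that configuration by an independent copy of $Z$, and integrate out the location of the atom. The only cosmetic difference is the coordinate: you keep the arrival-time variable with Lebesgue intensity and obtain $\EE(A-B)_+$ directly via Tonelli, while the paper works with $u=\Gamma^{-1}$ and intensity $u^{-2}\,\dd u$, evaluating $\int \mathbf{1}\{a<u\leq b\}\,u^{-2}\,\dd u$ to the same effect.
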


\begin{remark}
	\cite{osz18} showed that \begin{align*}
	\EE \lvert \Phi_{+} \rvert =\EE \, \bigg\{ \sup_{\bm{x} \in K} \frac{V(\bm{x})}{Z(\bm{x})} \bigg\}.
	\end{align*}
	In view of \eqref{eq:expectedtime} and $\EE(\lvert \Phi_+ \rvert) \leq \EE(T_{\tau}) + \EE(M_{\tau})$, we 
	believe that inequality \eqref{eq:new-bound} provides a relatively sharp bound
	for the error term $\EE(M_{\tau})$. In particular, it is sharper than 
	the bound 
	\begin{align*} 
	\EE(M_{\tau}) &
	\leq  
	\EE\,\bigg\{ \bigg(\sup_{\bm{x} \in K} \frac{V(\bm{x})}{Z(\bm{x})}\bigg) \mathbf{1}_{\big\{\sup_{\bm{x} \in K} V(\bm{x}) >\tau\big\}}\bigg\}
	\end{align*}
	in the proof of Proposition 10.4.2 in \cite{ord16}. 
	A significantly simplified (though less sharp) version of \eqref{eq:new-bound}
	is obtained by
	\begin{align*} 
	\EE(M_{\tau}) 
	\leq{}&  \EE \, \bigg\{\sup_{\bm{x} \in K} \frac{V(\bm{x}) - \tau}{Z(\bm{x})} \bigg\}_+ 
	{}\leq{} \EE \, \Big\{ \sup_{\bm{x} \in K} (V(\bm{x}) - \tau)_+\Big\} \,\, \EE \, \Big\{{1}/{\inf_{\bm{x} \in K} Z(\bm{x})}\Big\}.
	\end{align*}
\end{remark}

For both, the error bound  \eqref{eq:new-bound} and the exact error 
\eqref{eq:err}, it is generally difficult to provide analytic expressions. The
precise terms can however be assessed for finite 
$K=\{\bm{x}_1,\ldots,\bm{x}_N\}$ via simulation, see 
Appendix~\ref{app:error-simu} for details. The assessment is based on the 
observations that all the extremal functions of a max-stable process can be 
simulated via the Extremal Functions Algorithm~\ref{sec:survey}.\ref{alg:ef}
and that the potentially relevant non-extremal functions can be simulated 
independently once the extremal functions and the process $Z$ are known. This
allows us to check which of these functions would have been taken into account
by the Threshold Stopping Algorithm~\ref{sec:survey}.\ref{alg:ts}. Hence, we 
can compare the approximation $Z^{(T_{\tau})}$ and the exact realization $Z$
and identify the missing extremal functions. \vspace*{3mm}

\paragraph{Extremal functions.} Besides the threshold stopping algorithm, also 
the Extremal Functions Algorithm~\ref{sec:survey}.\ref{alg:ef} may include a 
simulation error if not all extremal functions $\varphi^{(1)}_+, \ldots,\varphi^{(N)}_+$
are taken into account. Considering the approximation $Z^{(n)}_+$ of 
$Z=Z^{(N)}_+$ on $K=\{\bm{x}_1,\ldots,\bm{x}_N\}$ after the $n$th step of the
extremal functions algorithm as given in \eqref{eq:ef-intermediate} for some 
$n \leq N$ yields the following analogies to Propositions~\ref{prop:general-error}~and~\ref{prop:expected_missing}.

\begin{proposition} \label{prop:sup-err_n}
	For any measurable function $f: C(K) \times K \to [0,\infty]$, we have
	\begin{align*}
	& \PP\big\{ |Z(\bm x) - Z^{(n)}_+(\bm x)| > f(Z^{(n)}_+,\bm x) \text{ for some } \bm x \in K\big\} \\
	={}& 1 - \EE_{Z^{(n)}_+}\bigg\{\exp\bigg(-\EE_V\bigg\{\max_{i=1,\ldots,n} \frac{V(\bm x_i)}{Z^{(n)}_+(\bm x)} - \sup_{\bm x \in K} \frac{V(\bm x)}{Z^{(n)}_+(\bm x) + f(Z^{(n)}_+,\bm x)}\bigg\}_+\bigg)\bigg\},
	\end{align*}
	where the spectral process $V$ and the process $Z^{(n)}_+$ are stochastically independent.
\end{proposition}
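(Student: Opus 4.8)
The plan is to condition on the approximating process $Z^{(n)}_+$ and to exploit the Poisson structure of the point process $\Phi = \{\Gamma_j^{-1} V_j\}_{j \in \NN}$ underlying the spectral representation~\eqref{eq:spec-repr}, in close analogy to the proof of Proposition~\ref{prop:general-error}. The key observation is that the event $\{|Z(\bm x) - Z^{(n)}_+(\bm x)| > f(Z^{(n)}_+,\bm x) \text{ for some } \bm x \in K\}$ is really a statement about whether any \emph{additional} Poisson point---beyond those already captured by the first $n$ extremal functions $\varphi^{(1)}_+,\ldots,\varphi^{(n)}_+$---pushes the running maximum above the tolerance band $Z^{(n)}_+(\bm x) + f(Z^{(n)}_+,\bm x)$ at some location $\bm x$. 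Since $Z^{(n)}_+ \leq Z$ pointwise with the two agreeing at $\bm x_1,\ldots,\bm x_n$, the deviation is always one-sided (it can only be an undershoot), so I would first rewrite the absolute-value event as $\{Z(\bm x) > Z^{(n)}_+(\bm x) + f(Z^{(n)}_+,\bm x) \text{ for some } \bm x \in K\}$.

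First I would describe explicitly, conditional on $Z^{(n)}_+$, the law of the point process of functions $\varphi \in \Phi$ that have \emph{not} yet been accounted for. By the defining property of the extremal functions algorithm recalled in Section~\ref{sec:EF}, conditional on $\varphi^{(1)}_+,\ldots,\varphi^{(n)}_+$ the remaining relevant points form a Poisson point process whose intensity is that of $\Phi$ restricted to the set $\{\varphi:\ \varphi(\bm x_i) \leq Z^{(n)}_+(\bm x_i) = \varphi^{(i)}_+(\bm x_i) \text{ for all } i=1,\ldots,n\}$, i.e.\ functions dominated at the first $n$ sites. Writing the intensity of $\Phi$ as $\mathrm{d}r\, r^{-2} \times \PP\{V \in \mathrm{d}v\}$ in the polar-type parametrization $\varphi = r^{-1} v$ with $r = \Gamma$, the probability that \emph{none} of these remaining points causes a deviation is $\exp(-\Lambda)$, where $\Lambda$ is the intensity-measure of the violating set. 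This is where the Poisson void probability does all the work.

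The main computational step is then to evaluate $\Lambda$ as a difference of two supremum terms. The violating set consists of functions $r^{-1} v$ satisfying simultaneously the restriction $r^{-1} v(\bm x_i) \leq Z^{(n)}_+(\bm x_i)$ for all $i \leq n$ (so that the point lies in the conditioned process) and the excess condition $r^{-1} v(\bm x) > Z^{(n)}_+(\bm x) + f(Z^{(n)}_+,\bm x)$ for some $\bm x \in K$. Integrating out $r$ against $r^{-2}\,\mathrm{d}r$ over the interval determined by these two inequalities yields, for each fixed sample path $v$ of $V$, the quantity $\bigl(\sup_{\bm x \in K} v(\bm x)/(Z^{(n)}_+(\bm x) + f(Z^{(n)}_+,\bm x)) - \max_{i=1,\ldots,n} v(\bm x_i)/Z^{(n)}_+(\bm x_i)\bigr)_+$; taking expectation over $V \sim \PP\{V \in \mathrm{d}v\}$ gives the inner expectation $\EE_V\{\cdots\}_+$ in the claimed formula. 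The independence of $V$ from $Z^{(n)}_+$ that I invoke here stems from the fact that $V$ in the formula is a fresh spectral process independent of everything already simulated, exactly as in Proposition~\ref{prop:general-error}. Finally, taking the outer expectation $\EE_{Z^{(n)}_+}$ over the conditioning and subtracting from one completes the identity.

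The hard part will be justifying the precise form of the two competing supremum terms, and in particular pinning down why the subtracted term is $\max_{i=1,\ldots,n} V(\bm x_i)/Z^{(n)}_+(\bm x_i)$ rather than a supremum over all of $K$: this reflects that the conditioning in the extremal functions construction constrains the candidate functions only at the already-processed sites $\bm x_1,\ldots,\bm x_n$, not on the whole domain. Care is also needed to confirm that integrating the $r^{-2}$ intensity over the correct range reproduces exactly this difference of reciprocal-supremum expressions, and that the positive-part truncation arises naturally because the range of admissible $r$ is nonempty only when the excess threshold is genuinely larger than the conditioning threshold. Once the intensity measure of the violating set is identified with this expectation, the remainder is a direct application of the Poisson void probability, mirroring the earlier proposition.
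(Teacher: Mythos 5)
Your proposal follows essentially the same route as the paper's proof: condition on $Z^{(n)}_+$, observe that the as-yet-unaccounted-for points form, conditionally, a Poisson process with the intensity of $\Pi$ restricted to functions dominated at $\bm x_1,\ldots,\bm x_n$, and evaluate the void probability of the violating set by integrating the intensity in the radial variable between the bound imposed by the excess condition and the bound imposed by the domination condition; this is exactly the paper's argument. Two remarks. First, your parametrization is internally inconsistent: with $r=\Gamma$ and $\varphi=r^{-1}v$ the intensity in $r$ is Lebesgue $\mathrm{d}r$, not $r^{-2}\,\mathrm{d}r$; the latter is the intensity of $u=\Gamma^{-1}$ with $\varphi=uv$, which is what your final computation actually uses, so no harm results. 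Second, the inner quantity you derive, $\bigl(\sup_{\bm x \in K} V(\bm x)/(Z^{(n)}_+(\bm x)+f(Z^{(n)}_+,\bm x)) - \max_{i=1,\ldots,n} V(\bm x_i)/Z^{(n)}_+(\bm x_i)\bigr)_+$, differs from the proposition as printed, but it agrees with the integral $\int u^{-2}\,\mathrm{d}u$ displayed in the paper's own proof and is the correct order: the printed statement (and the paper's last proof line) transposes the two terms and writes $Z^{(n)}_+(\bm x)$ where $Z^{(n)}_+(\bm x_i)$ is meant. These are evidently typos in the paper, since with $f\equiv 0$ and $n<N$ the transposed version would make the positive part vanish identically and force the error probability to be zero, contradicting $\PP\{Z^{(n)}_+\neq Z\}>0$; your version is the one to keep.
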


\begin{proposition} \label{prop:expected_missing_n}
	The expected number $\EE(M^{(n)}_+)$ of missing extremal functions $M^{(n)}_+$ after $n$ steps of the
	extremal functions algorithm can be computed as 
	\begin{align} \label{eq:error-ef}
	\EE(M^{(n)}_+) ={}& \EE\bigg\{\sup_{\bm x \in K} \frac{V(\bm x)}{Z(\bm x)}\bigg\}
	- \EE\bigg\{\max_{i=1,\ldots,n} \frac{V(\bm x_i)}{Z(\bm x_i)}\bigg\},
	\end{align}
	where the max-stable process $Z$ and the spectral process $V$ are stochastically independent.
\end{proposition}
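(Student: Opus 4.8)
The plan is to reduce the claim to a telescoping of two extremal-function counts and then to invoke the identity $\EE\lvert\Phi_+\rvert = \EE\{\sup_{\bm{x} \in K} V(\bm{x})/Z(\bm{x})\}$ of \citet{osz18}, recalled in the remark above, applied both to the full domain $K$ and to the sub-domain $K_n = \{\bm{x}_1,\ldots,\bm{x}_n\}$.

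First I would identify the distinct functions gathered in the first $n$ steps of the Extremal Functions Algorithm~\ref{sec:survey}.\ref{alg:ef} with the extremal functions of the restriction $Z|_{K_n}$. By definition, $\varphi^{(\ell)}_+$ is the unique $\varphi \in \Phi$ attaining $Z(\bm{x}_\ell) = \max_{j \in \NN}\varphi_j(\bm{x}_\ell)$, and this pointwise maximum is unaffected by restricting the index set from $K$ to $K_n$ whenever $\ell \le n$. Hence the set $\Phi^{(n)}_+$ of distinct elements among $\varphi^{(1)}_+,\ldots,\varphi^{(n)}_+$ consists of exactly those functions of $\Phi$ that are maximal at some $\bm{x}_i$ with $i \le n$, which are precisely the extremal functions of $Z|_{K_n}$. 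Since every such function is also maximal at some point of the larger set $K$, we have the inclusion $\Phi^{(n)}_+ \subseteq \Phi_+$, so the missing extremal functions after $n$ steps are exactly $\Phi_+ \setminus \Phi^{(n)}_+$ and
\begin{align*}
M^{(n)}_+ = \lvert\Phi_+\rvert - \lvert\Phi^{(n)}_+\rvert \qquad \text{almost surely.}
\end{align*}

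Taking expectations and applying the osz18 identity to $K$ and to $K_n$ separately then gives
\begin{align*}
\EE(M^{(n)}_+) = \EE\lvert\Phi_+\rvert - \EE\lvert\Phi^{(n)}_+\rvert = \EE\Big\{\sup_{\bm{x} \in K}\frac{V(\bm{x})}{Z(\bm{x})}\Big\} - \EE\Big\{\sup_{\bm{x} \in K_n}\frac{V(\bm{x})}{Z(\bm{x})}\Big\},
\end{align*}
where $V$ and $Z$ are independent throughout. Since $K_n$ is finite, $\sup_{\bm{x} \in K_n}$ is the maximum over $i=1,\ldots,n$, and the pointwise values $Z(\bm{x}_i)$ of the restricted process coincide with those of $Z$; the second expectation is therefore $\EE\{\max_{i=1,\ldots,n} V(\bm{x}_i)/Z(\bm{x}_i)\}$, which is exactly \eqref{eq:error-ef}.

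I expect the main obstacle to lie entirely in the first step, namely in the bookkeeping that certifies $\Phi^{(n)}_+$ to be simultaneously (i) the set of distinct functions produced by the algorithm in its first $n$ passes and (ii) the full set of extremal functions of the sub-domain $K_n$. One must handle multiplicities carefully, since a single function may be extremal at several of the $\bm{x}_i$, and must check that restricting the spectral representation to $K_n$ yields a bona fide simple max-stable process whose pointwise marginals agree with $Z$, so that applying the osz18 identity on $K_n$ is legitimate. Once this identification is in place the result is immediate, and I would resist the temptation of a direct moment computation, which would be considerably messier than the two-fold application of the known count.
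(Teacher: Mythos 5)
Your proof is correct, but it takes a genuinely different route from the paper's. You reduce the claim to the identity $\EE\lvert\Phi_+\rvert=\EE\{\sup_{\bm x\in K}V(\bm x)/Z(\bm x)\}$ of \citet{osz18} (recalled in the paper's remark after Proposition~\ref{prop:expected_missing}), applied once on $K$ and once on the sub-domain $K_n=\{\bm x_1,\ldots,\bm x_n\}$, after identifying the distinct functions produced in the first $n$ steps with the extremal functions of the restriction $Z|_{K_n}$, so that $M^{(n)}_+=\lvert\Phi_+\rvert-\lvert\Phi^{(n)}_+\rvert$ almost surely. The paper instead argues self-containedly with the same machinery that underlies the cited identity: it writes $M^{(n)}_+$ as the number of points $(u,v)$ of the Poisson process $\Pi$ satisfying
\begin{align*}
\inf_{\bm x\in K}\frac{\bigvee_{(\tilde u,\tilde v)\in\Pi\setminus\{(u,v)\}}\tilde u\,\tilde v(\bm x)}{v(\bm x)}
\;<\; u \;\le\;
\min_{i=1,\ldots,n}\frac{\bigvee_{(\tilde u,\tilde v)\in\Pi\setminus\{(u,v)\}}\tilde u\,\tilde v(\bm x_i)}{v(\bm x_i)},
\end{align*}
applies the Slivnyak--Mecke equation a single time, and obtains \eqref{eq:error-ef} directly from $\int_a^b u^{-2}\,\dd u=1/a-1/b$; the subtraction thus takes place inside one integral, with no $\infty-\infty$ bookkeeping. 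Your telescoping needs two extra (easy) observations, both of which you flag: that multiplicities are handled by counting distinct functions, with $\lvert\Phi^{(n)}_+\rvert\le n$ deterministically so the difference of expectations is well defined once $\EE\lvert\Phi_+\rvert<\infty$, and that $Z|_{K_n}$ is again a simple max-stable process with spectral process $V|_{K_n}$, legitimizing the identity on the sub-domain. What your route buys is brevity modulo the citation and a structural insight (the first $n$ steps of the algorithm solve the simulation problem on $K_n$ exactly); what the paper's buys is independence from the external result---indeed, unwound, your argument re-proves that identity, which is precisely the boundary case of the proposition (empty maximum for $n=0$, respectively $K_n=K$).
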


\section{Comparative numerical study} \label{sec:study}

In order to gain further insights on the comparative performance of the different approaches to max-stable process simulation, specifically as we deviate from the exact setting, the absense of analytic expressions makes it necessary to conduct a broader numerical study. We focus on comparing the three generic and potentially exact methods from Table~\ref{table:efficiency}/Figure~\ref{fig:generic-summary} (DM/EF/SN) when applied to generate (approximate or exact) samples from the widely used classes of 
\begin{enumerate}[label={(\roman*)},itemsep=0mm]
	\item \emph{Brown-Resnick processes} \citep{ksdh09} with spectral representation~\eqref{eq:spec-BR} and fBS variogram $\vario(\bm{h})=2v \lVert \bm{h} \rVert^\alpha$, where $v>0$ and $\alpha \in (0,2)$.
	\item \emph{extremal-$t$ processes} \citep{opitz13} with spectral representation
	\begin{align}
	V(\bm{x})= \frac{\sqrt{\pi} \,2^{1-\nu/2}}{\Gamma((\nu+1)/2)}\, W(\bm{x})_+^\nu, \quad \bm{x} \in K,
	\end{align}
	where $W$ is a standard Gaussian random field with {exponential correlation
		function $\rho(\bm{h}) = \Cov(W(\bm x + \bm h), W(\bm x)) = \exp(-\lVert \bm{h} / s \rVert)$} with scale $s>0$ and the parameter $\nu >0$ influences the degrees of freedom of the underlying multivariate $t$-distribution in the dependence structure, cf.\ \cite{opitz13}.
\end{enumerate}
Both classes of processes are based on underlying Gaussian random fields $W$ and all algorithms considered to obtain an (approximate or exact) sample $Z^{\text{(simulated)}}$ of the associated max-stable process $Z$ are based on repeated sampling from $W$. Therefore, the number $N_W(K)$ of Gaussian processes $W_j \sim W$ needed for one sample of $Z$ constitutes a natural measure for the algorithms' efficiency  and we call the number $\EE(N_W(K))$ that is needed on average the \emph{mean time} in this context. 

\begin{table}
	\centering
	\caption{\small 
		The expected number $\EE(N_W(K))$ of Gaussian processes $W_j$ to be simulated 
		to obtain an exact sample of the associated Brown-Resnick or extremal-$t$ process $Z$ on a set $K = \{\bm x_1,\ldots, \bm x_N\}$ for the three generic simulation algorithms from Table~\ref{table:efficiency}, where the SN algorithm is based on rejection sampling  \citep{defondeville-davison-18}.
The ratio $c_V(K)/c_W(K)$ represents the number of samples from $W$ that are needed to obtain a sample from $V$.
}
	\label{table:efficiencyGaussian}
	{\small
		\begin{tabular}{llccc}
			\toprule
			\multicolumn{2}{l}{\bf Method} 
			& Spectral functions $V$ 
			& {$c_V(K)/c_W(K)$} 
			& Mean time {$\EE(N_W(K))$} \\
			\midrule
			\multirow{2}{*}{SN}
			&
			& \multirow{2}{*}{$V^{\lVert\cdot\rVert_\infty}$}
			& \multirow{2}{*}{$N/\theta_{\lVert\cdot\rVert_\infty}$} 
			&   \multirow{2}{*}{$N \, \EE \, \big\{ {1}/{\inf_{\bm{x} \in K} Z(\bm{x})} \big\}$}\\
			& \\[1mm]
			\multirow{2}{*}{DM}
			&
			& \multirow{2}{*}{$V^{\lVert\cdot\rVert_1}$}
			& \multirow{2}{*}{1} 
			&  \multirow{2}{*}{$N \, \EE \, \big\{ {1}/{\inf_{\bm{x} \in K} Z(\bm{x})} \big\}$}
			\\
			& \\[1mm]
			\multirow{2}{*}{EF}
			&
			& \multirow{2}{*}{$V^{(1)},\ldots,V^{(N)}$}
			& \multirow{2}{*}{1}
			& \multirow{2}{*}{$N$} \\
			& \\
			\bottomrule
		\end{tabular}
	}
\end{table}

For exact simulation, it is possible to derive the precise mean times of each algorithm from the theoretical considerations of Section~\ref{sec:eff-acc}, see Table~\ref{table:efficiencyGaussian}.  While sampling from the spectral processes $V^{(1)}, V^{(2)}, \dots, V^{(N)}$ and $V^{\lVert \cdot \rVert_1}$ is straightforward for Brown-Resnick and extremal-$t$ processes and involves only one Gaussian process simulation for each sample of the respective spectral process  (see \cite{deo16}, \cite{dm15} or Sections~\ref{sec:normalized} and \ref{sec:BRgeneric}), we choose to use the rejection sampling algorithm proposed by \cite{defondeville-davison-18} {based on sum-normalized spectral processes as proposals} to obtain (exact) samples from the sup-normalized spectral function $V^{\lVert\cdot\rVert_\infty}$. In this case we need to take into account the average acceptance rate $\theta_{\lVert\cdot\rVert_\infty}/N$, see Table~\ref{table:efficiencyGaussian}. In view of \eqref{eq:infconstantgeqone} this shows already that for exact simulation, the EF algorithm should be preferred over the DM approach and the SN approach according to the mean time $\EE(N_W(K))$.

\begin{figure}
	{\small
		\centering
		\hspace*{7mm}
		\tabcolsep0.95cm
		\begin{minipage}{12cm}
				\hspace*{1.3mm}
				\tabcolsep8.3mm
			\begin{tabular}{ccc}
				$\bm{\text{\textbf{Variance}} = 0.5}$
				& $\bm{\text{\textbf{Variance}} = 1}$
				& $\bm{\text{\textbf{Variance}} = 2}$
		\end{tabular}\end{minipage}\\
		\rotatebox{90}{
			\begin{minipage}{20cm}
				\hspace*{1.3mm}
				\tabcolsep1.33cm
				\begin{tabular}{ccccc}
					$\bm{\alpha = 1.8}$
					& $\bm{\alpha = 1.4}$
					& $\bm{\alpha = 1.0}$
					& $\bm{\alpha = 0.6}$
					& $\bm{\alpha = 0.2}$
				\end{tabular}
				\vspace*{1mm}
			\end{minipage}
		}
		\includegraphics[width=11.8cm]{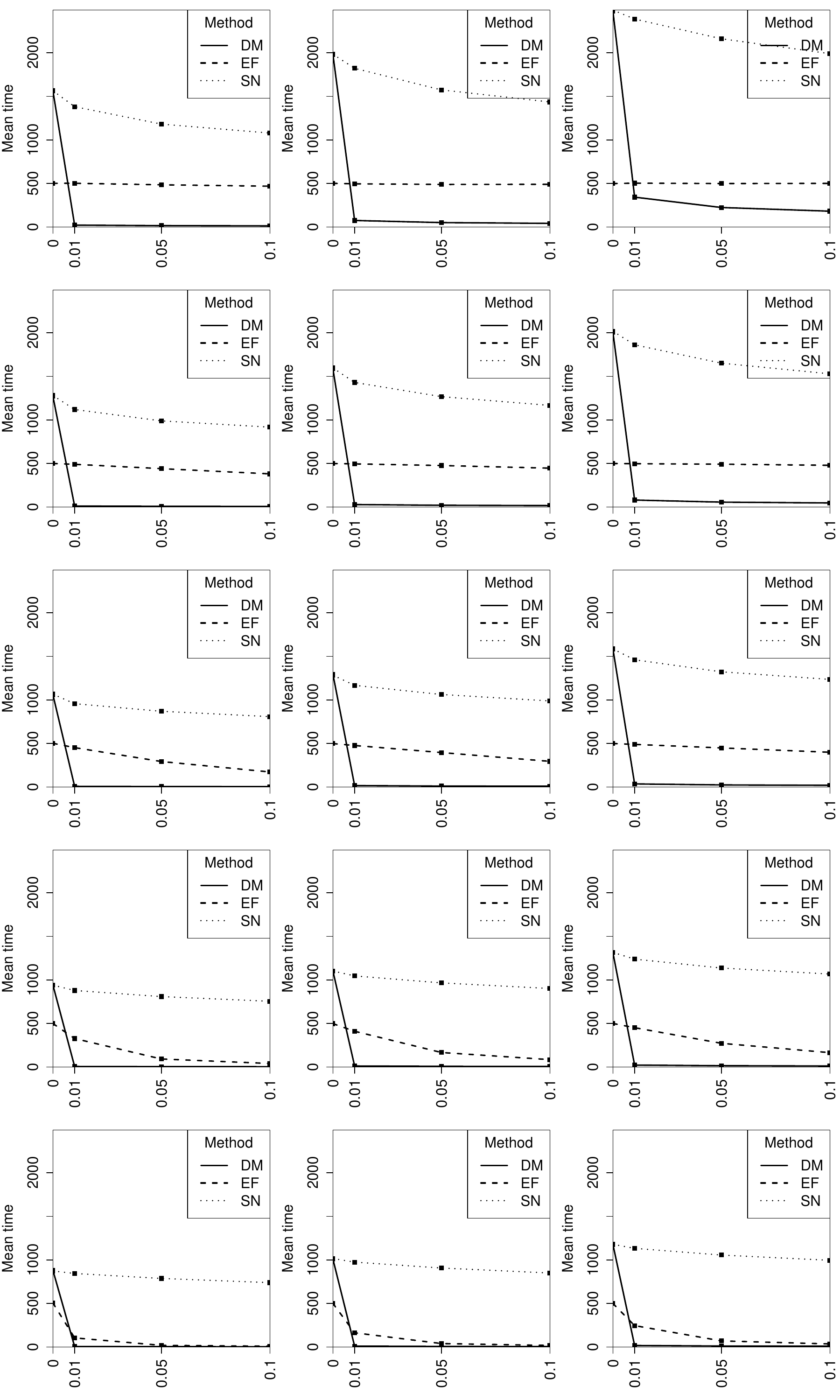}
		\caption{\small Brown-Resnick process simulation in 15 scenarios: Displayed are the mean times of three generic simulation algorithms (DM/EF/SN) for a given tolerated error probability ranging from 0 (``exact simulation'') to 0.1, see Section~\ref{sec:study} and Figure~\ref{fig:BR_DM} for further details.}
		\label{fig:BR4}
	}
\end{figure}

The main purpose of our study is now to investigate the relative efficiency of the algorithms as we vary their accuracy in a reasonable range. As a simulation domain we consider the 501 equi-distantly spaced points $K=\{-1,-0.996,\dots,1\}$ in the interval $[-1,1]$. In the Brown-Resnick case, we consider the parameter scenarios that arise from choosing $\alpha$ in $\{0.2,0.6,1.0,1.4,1.8\}$ (rather noisy to rather smooth) and variance parameter $v$ in $\{0.5,1,2\}$. The extremal-$t$ scenarios consist of $\nu \in \{1,2,4\}$ and scale parameter $s \in \{0.5,1,2\}$. Further we prespecify error probabilities $\mathcal{P} = \PP(Z^{\text{(simulated)}} \neq Z) \in \{0,0.01, 0.05, 0.1\}$ that we are willing to tolerate, whilst observing the corresponding times $N_W(K)$ and estimating the mean time $\EE(N_W(K))$ based on 50\,000 simulations in each case. For the threshold stopping approaches DM and SN the error probability $\mathcal{P}$  coincides  with the benchmark error term $\mathcal{P}_\tau$ in \eqref{eq:err}. That is, for these algorithms we need to select the threshold $\tau$ appropriately in order ensure $\mathcal{P}$ assumes the right value. For the EF approach we deviate from accuracy by fixing an appropriate equi-distantly spaced subset of locations in the simulation domain $K$. The appropriate thresholds and subsets for given error probability $\mathcal{P}$ were also found simulation-based.

\begin{figure}
	\centering
	\includegraphics[width=6cm]{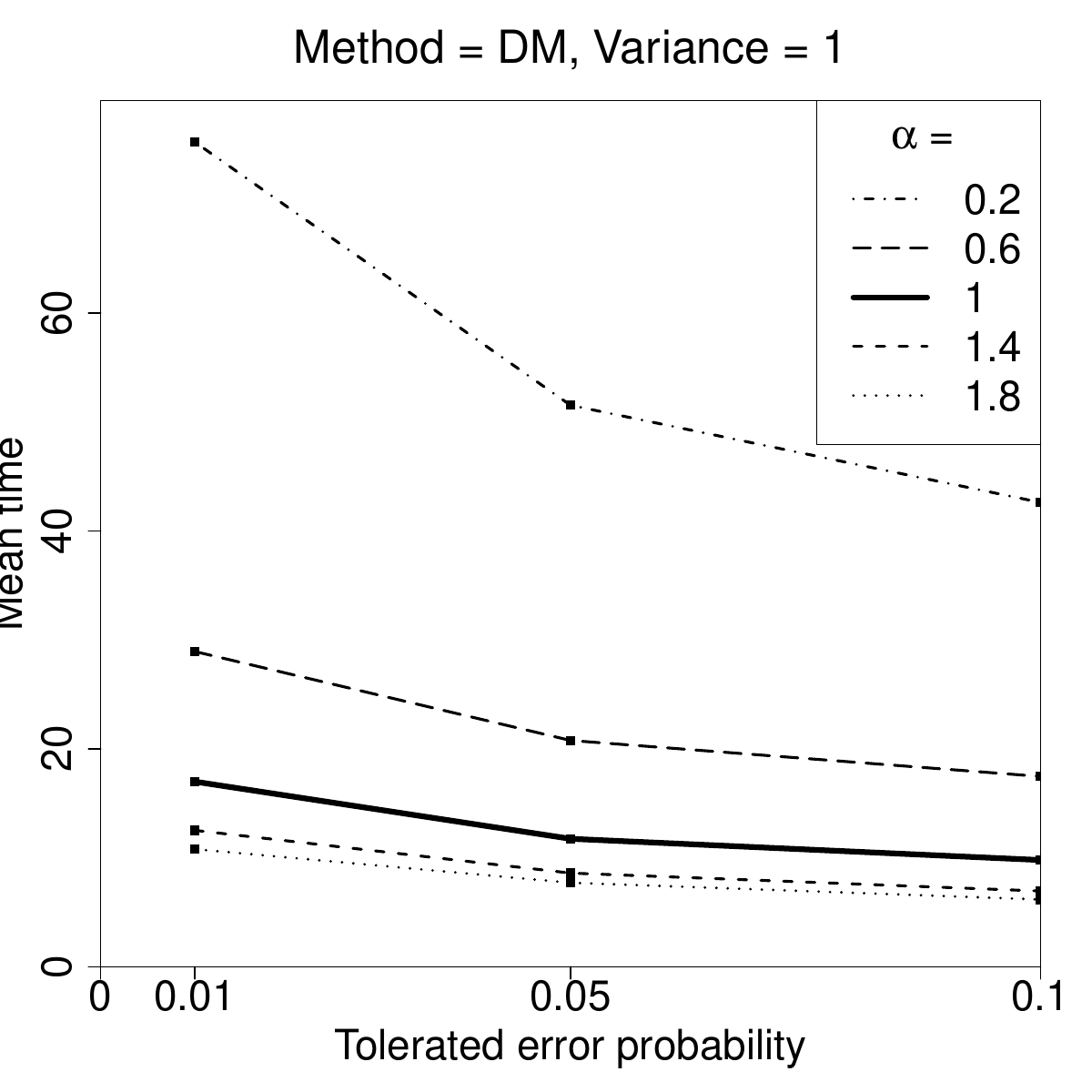}
	\includegraphics[width=6cm]{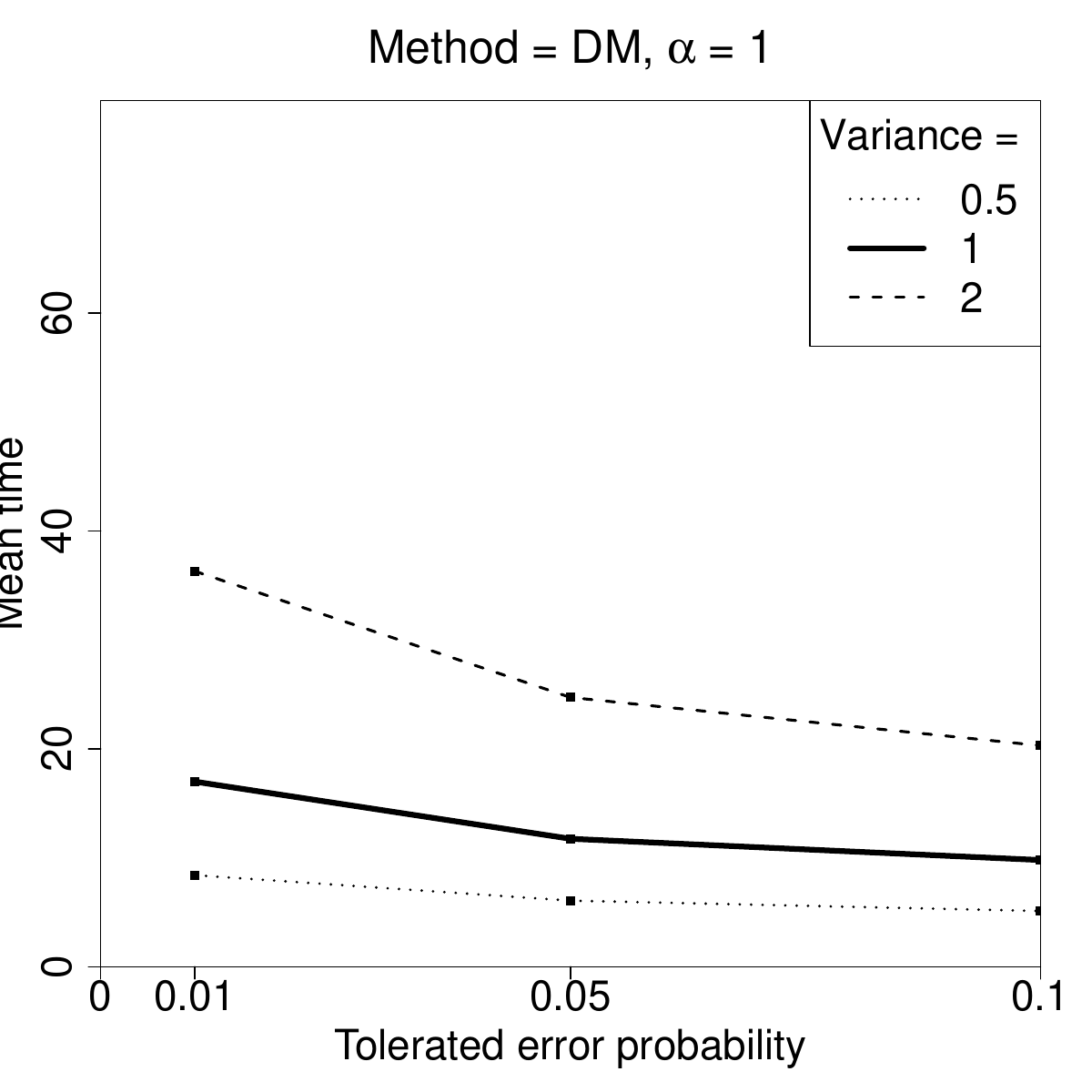}
	\caption{\small A closer look at the mean times of the Dieker-Mikosch (DM) algorithm for Brown-Resnick process simulation. The plots complement Figure~\ref{fig:BR4} revealing the scale and variability of the mean times of the DM algorithm, which may seem reduced to zero in Figure~\ref{fig:BR4}.}
	\label{fig:BR_DM}
\end{figure}

\begin{figure}
	{\small
		\centering
		\hspace*{14mm}
		\tabcolsep1.17cm
		\begin{minipage}{12cm}
                        \tabcolsep1.12cm
			\begin{tabular}{ccc}
				$\bm{\text{\textbf{Scale}} = 0.5}$
				& $\bm{\text{\textbf{Scale}} = 1}$
				& $\bm{\text{\textbf{Scale}} = 2}$
		\end{tabular}\end{minipage}\\
		\rotatebox{90}{
			\begin{minipage}{12cm}
				\hspace*{1mm}
				\tabcolsep1.55cm
				\begin{tabular}{ccccc}
					$\bm{\nu = 4}$
					& $\bm{\nu = 2}$
					& $\bm{\nu = 1}$
				\end{tabular}
				\vspace*{1mm}
			\end{minipage}
		}
		\includegraphics[width=12cm]{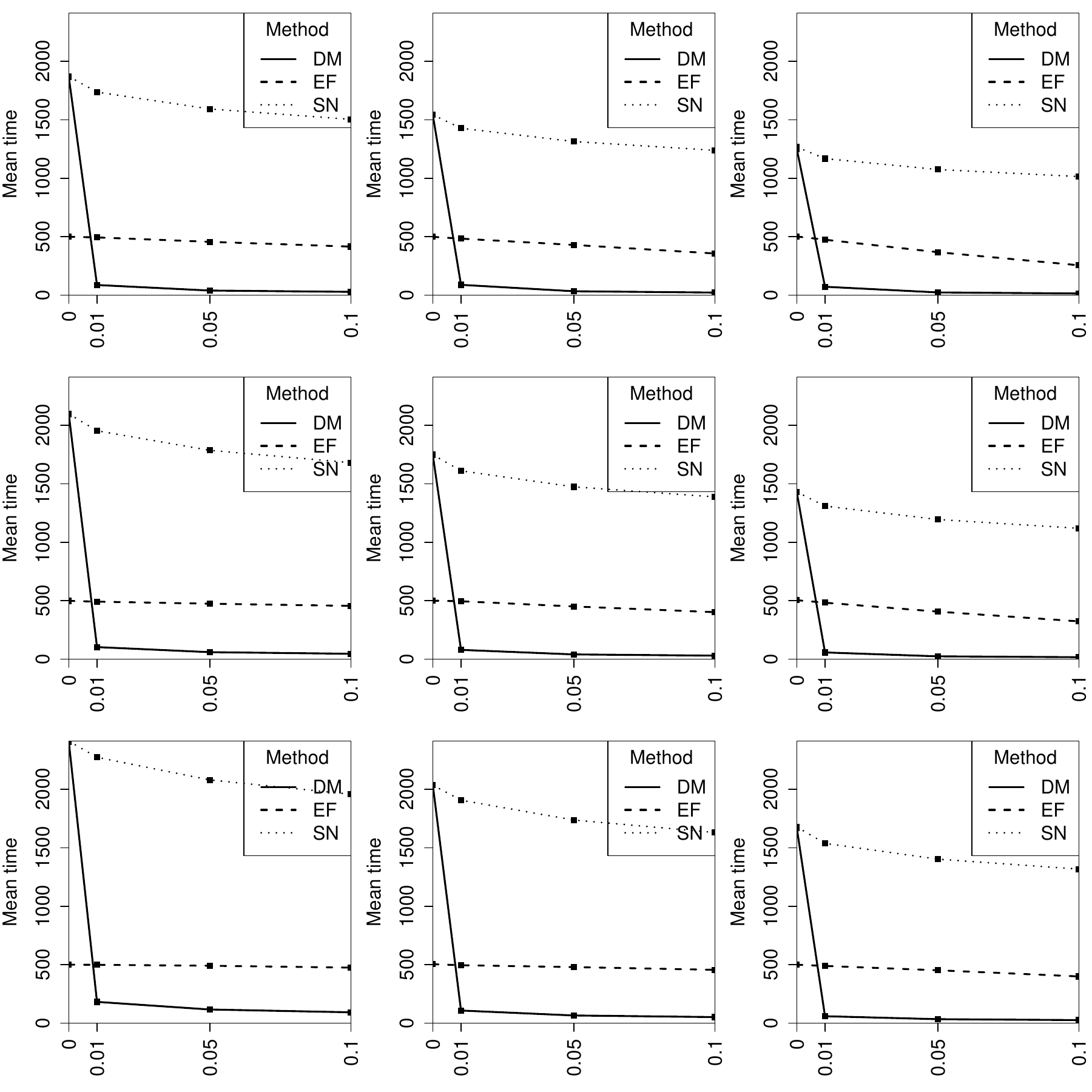}
		\caption{\small Extremal-t-process simulation in 9 different parameter scenarios: Displayed are the mean times of three generic simulation algorithms (DM/EF/SN) for a given tolerated error probability ranging from 0 (``exact simulation'') to 0.1, see Section~\ref{sec:study} for further details.}
		\label{fig:ET4}
	}
\end{figure}

The results of our study for Brown-Resnick processes are reported in Figures~\ref{fig:BR4} and \ref{fig:BR_DM} and for extremal-$t$ processes in Figure~\ref{fig:ET4}. Some of the observations are as expected. The mean time increases in each scenario with lower tolerable error probability. It also increases in the Brown-Resnick case with higher variance and as the processes' roughness increases due to smaller $\alpha$. In the extremal-$t$ case the mean time increases as the scale gets smaller and as the degrees of freedom parameter $\nu$ increases. However, our main interest lies in the relative performance of the three algorithms DM, EF and SN. And whilst for exact simulation ($\mathcal{P} = 0$) the theoretical dominance of the EF approach can be confirmed, the DM approach seems to be uniformly best once we allow for a small tolerable error probability $\mathcal{P} \geq 0.01$. We anticipate a critical value $\mathcal{P}_{\text{critical}}$ closer to zero for the tolerable error probability at which the EF approach will start to perform better than the DM algorithm. The SN approach -- in the form considered here -- cannot match up with either the EF or DM algorithm, chiefly because sampling from the sup-normalized spectral process is costly for Brown-Resnick and extremal-$t$ processes. However, in Section~\ref{sec:discussion} we will point the reader to  modifications of the SN approach and situations, in which it can be very valuable again.

\section{Discussion} \label{sec:discussion}

The simulation of max-stable processes has become an important task as part of spatial risk assessment, specifically in the environmental sciences. The last decade saw several new approaches to the simulation of max-stable processes. The present article provides an overview and compares the generic approaches according to their efficiency in relation to their accuracy. Moving from accurate simulation to tolerating small errors is a major issue of practical concern due to the inherently large computational costs for simulating a max-stable process. We contextualize known theoretical results on the efficiency in the exact setting  (cf.~Tables~\ref{table:efficiency} and \ref{table:efficiencyGaussian}), while adding some new point process based results on the efficiency and accuracy for the approximate setting (cf.~Section~\ref{sec:eff-acc}). An at first sight surprising observation of our numerical study is that the Dieker-Mikosch (DM) approach -- despite being uniformly worse than the extremal functions (EF) approach in the exact setting -- significantly outperforms all generic approaches, once we allow for a small tolerable error probability. 
That said, this finding is  in line with the computational results in \cite{os18} and may be attributed to the DM approach's probabilistic homogeneity of spectral functions. In other words, compared to other algorithms, the DM approach ``converges'' enormously fast to a stochastic process, which is an accurate sample of a max-stable process with very high probability. However, the algorithm fails to be certain and seeks this confirmation for a very long time.

Further, our numerical study might create the impression that the threshold stopping approach using sup-normalized spectral functions (SN) is not worth considering anymore. We would like to correct that impression by emphasizing that the success of this approach depends largely on the ability to efficiently simulate from the sup-normalized spectral process $V^{\lVert\cdot\rVert_\infty}$, which is a research question in its own and of independent interest in other contexts, see also Remark~\ref{rk:pareto}.
In fact, the motivation of \citet{defondeville-davison-18} for introducing the generic rejection sampling approach was not to use it for max-stable process simulation, but to obtain accurate samples from associated Pareto-processes to be readily available for threshold-based inference. Spatio-temporal threshold based inference on extremes is currently an active area of research. 
For Brown-Resnick processes or extremal-$t$ processes sampling from $V^{\lVert\cdot\rVert_\infty}$ is hard and choosing a generic rejection sampling approach for this task is not particularly helpful, which explains the poor performance of the SN approach in our study. 
Alternatives include MCMC approaches \citep{osz18, oss19} and for the class of Brown-Resnick processes modified rejection sampling \citep{oss19} or using the ansatz of \citet{ho-dombry-17}.
For other classes of max-stable processes, the SN approach may well be the most efficient way of exact simulation. For instance, \citet{osz18} show that the sup-normalized process $V^{\lVert\cdot\rVert_\infty}$ can be easily simulated for a broad subclass of mixed moving maxima processes including Gaussian extreme value models \citep{smith90}. Simulation procedures for several of these max-stable models are implemented in the \texttt{R} package \texttt{RandomFields} \citep{RandomFields}.\vspace*{3mm}

We conclude with some practical advice. First of all, we recommend to use exact simulation of max-stable processes, whenever it is feasible. The EF algorithm is designed for this purpose and from our perspective it is the generic approach to use as long as the number of points $N$ in the simulation domain does not get too large. Should exact simulation from the sup-normalized spectral process $V^{\lVert\cdot\rVert_\infty}$ not be too costly, e.\,g.\  for mixed moving maxima processes, then the SN approach can be a worthwhile alternative. In view of the comparison in Table~\ref{table:efficiency} it may even reduce the computational cost significantly, when a large number $N$ of points in a fixed domain $K$ is considered. For approximate simulation, the DM approach seems to perform best -- at least we could not detect a single scenario during our extensive numerical studies in which this was not the case. Unfortunately, the efficiency comes at the price of not knowing when to stop the DM algorithm. We therefore recommend to employ at least additional checks to ensure the obtained sample exhibits reasonable characteristics.  
Alternatively, the SN approach also lends itself as an approximate approach  by means of more efficient MCMC techniques to obtain samples from the sup-normalized spectral process $V^{\lVert\cdot\rVert_\infty}$ as discussed above.

Finally, we would like to mention that for specific classes of max-stable processes, such as Brown-Resnick processes, specific approaches tailored to this class, such as \cite{lbdm16}, may be worth considering, even though meaningful comparisons in terms of efficiency and accuracy seem difficult to achieve, cf.~Remark~\ref{rk:lbdm}, and it is unclear how an approximate version would look like in this case.

\section*{Acknowledgements}
The new theoretical results for this manuscript were obtained during mutual visits of KS at the University of Siegen and MO at Cardiff University. MO and KS thank their hosting institutions for their generous hospitality. The authors are also very grateful for the thoughtful suggestions from the reviewing process. In particular, these comments resulted in the clarification of  the marginal standardization and the inclusion of Section~\ref{sec:dataexample}. This substantial revision was undertaken during summer/autumn 2020. MO thankfully acknowledges financial support by Deutsche Forschungsgemeinschaft (DFG, German Research Foundation) under Germany's Excellence Strategy -- EXC 2075 -- 390740016 at the University of Stuttgart. 

\bibliographystyle{imsart-nameyear}
\bibliography{literature}

\newpage
\appendix

\section{Proofs} \label{app:proofs}

The proofs given in this section rely on the fact that the pairs
$\{(U_i,V_i)\}_{i \in \NN} = \{(\Gamma_i^{-1},V_i)\}_{i \in \NN}$ form a 
Poisson point process $\Pi$ on $(0,\infty) \times C(K)$ with intensity 
$u^{-2} \mathrm{d}u \, \PP\{V \in \mathrm{d}v\}$. We make extensive use of the
Slivnyak-Mecke equation, see e.g.\ {Equation (4.1) in \citet{moeller-03}},
which we recall here for convenience for our situation. To this end, let
$\mathcal{N}$ denote the set of locally finite {simple} counting measures $(0,\infty) \times C(K)$, 
whose $\sigma$-algebra is generated by evaluations on Borel subsets of 
$(0,\infty) \times C(K)$. With a slight (but common and convenient) abuse of
notation by identifying simple counting measures with their induced sets, we
have
\begin{align}\label{eq:slivnyakmecke}
\EE\bigg\{ \sum_{(u,v) \in \Pi} f((u,v),\Pi \setminus \{(u,v)\})\bigg\} 
= \int_{(0,\infty) \times C(K)} \EE_\Pi\big\{ f((u,v),\Pi) \big\} \, u^{-2} \mathrm{d}u \, \PP\{V \in \mathrm{d}v\}.
\end{align}
for any non-negative measurable function $f: ((0,\infty) \times C(K)) \times \mathcal{N} \rightarrow [0,\infty)$.

\begin{proof}[Proof of {\normalfont \bf Proposition \ref{prop:expected_stopping}}]
	By definition of $T_{\tau}$, we have
	\begin{align*}
	\EE(T_{\tau}) &= \EE \, \bigg|\bigg\{(u,\spec) \in \Pi:\ u \tau > \inf_{\bm{x} \in K} \bigvee_{(\tilde u, \tilde \spec) \in \Pi, \ \tilde u > u} \tilde u \tilde \spec(\bm{x})\bigg\}\bigg|\\
	&\geq \EE \, \bigg|\bigg\{(u,\spec) \in \Pi:\   u > \inf_{\bm{x} \in K} \frac{\bigvee_{(\tilde u,\tilde \spec) \in \Pi \setminus \{(u,\spec)\}} \tilde u \tilde \spec(\bm{x})} \tau\bigg\}\bigg|,
	\end{align*}
	with equality {if and only if} $\sup_{\bm x \in K} V(\bm x) \leq \tau$ almost surely.
	Then, the Slivnyak-Mecke equation~\eqref{eq:slivnyakmecke} can be applied to the right-hand side to obtain
	\begin{align*} 
	& \EE \, \bigg|\bigg\{(u,\spec) \in \Pi:\   u > \inf_{\bm{x} \in K} \frac{\bigvee_{(\tilde u,\tilde \spec) \in \Pi \setminus \{(u,\spec)\}} \tilde u \tilde \spec(\bm{x})} \tau\bigg\}\bigg| \\
	&= \int_{C(K)} \int_{C(K)} \int_0^\infty  \mathbf{1}_{\big\{\inf_{\bm{x} \in K} \frac{z(\bm{x})}{\tau} < u\big\}}  \, u^{-2} \, \dd u \, \PP\{V \in \dd \spec\} \, \PP\{Z \in \dd z\} 
	= \EE \bigg\{\sup_{\bm{x} \in K} \frac{\tau} {Z(\bm{x})} \bigg\}	\end{align*}
as desired.
\end{proof}

\begin{proof}[Proof of {\normalfont \bf Proposition \ref{prop:general-error}}]
	Since any condition on $Z^{(T_{\tau})}$ can be rewritten in terms of the restricted 
	point process
	$\Pi ( \,\cdot\, \cap ((\tau^{-1} \inf_{\bm{x} \in K} Z^{(T_{\tau})}(\bm{x}),\infty) \times C(K)))$, 
	we have that, conditional on $Z^{(T_{\tau})}$, the restricted point process 
	
\begin{align*}
\Pi( \,\cdot\, \cap ((0,\tau^{-1} \inf_{\bm{x} \in K} Z^{(T_{\tau})}(\bm{x})) \times C(K)))
\end{align*}
is a Poisson point process with intensity measure $u^{-2} \dd u \, \PP(V \in \dd v)$.
	Consequently,
	\begin{align*}
	& \PP\big\{  \lvert Z(\bm{x}) - Z^{(T_{\tau})}(\bm{x})\rvert > f(Z^{(T_{\tau})},\bm{x}) \text{ for some } \bm{x} \in K \mid Z^{(T)}\big\} \\
	={}& \PP\left\{ \exists (u,v) \in \Pi: u \tau < \inf_{\bm{x} \in K} Z^{(T_{\tau})}(\bm{x}), \right.\\
& \hspace*{2.5cm} \left. \ u v(\bm{x}) > Z^{(T_{\tau})}(\bm{x}) + f(Z^{(T_{\tau})},\bm{x}) \text{ for some } \bm{x} \in K\right\} \displaybreak[0]\\
	={}& 1 - \exp\bigg(- \EE_V\bigg\{\int_{\inf_{\bm{x} \in K} \frac{Z^{(T_{\tau})}(\bm{x}) + f(Z^{(T)},\bm{x})}{V(\bm{x})}}^{\inf_{\bm{x} \in K} \frac{Z^{(T_{\tau})}(\bm{x})}{\tau}} u^{-2} \dd u\bigg\}\bigg) \displaybreak[0]\\
	={}& 1 - \exp\bigg(- \EE_V\bigg\{\sup_{\bm{x} \in K} \frac{V(\bm{x})}{Z^{(T_{\tau})}(\bm{x}) + f(Z^{(T)},\bm{x})} - \sup_{\bm{x} \in K} \frac{\tau}{Z^{(T_{\tau})}(\bm{x})}\bigg\}_+ \,\bigg).
	\end{align*}
	Taking the expectation with respect to $Z^{(T_{\tau})}$ finishes the proof. 
\end{proof}

\begin{proof}[Proof of {\normalfont \bf Proposition \ref{prop:expected_missing}}]
	Let $\Pi_+ = \{ (u,\spec) \in \Pi: \ u \spec(\cdot) \in \Phi_+\}$. Then, we can rewrite
    \begin{align*} 
       M_{\tau} = \bigg\lvert \bigg\{ (u,\spec) \in \Pi_+:\ u \tau \leq \inf_{\bm{x} \in K}
         \bigvee_{(\tilde u, \tilde \spec) \in \Pi, \, \tilde u > u} \tilde u \tilde v(\bm{x})\bigg\} \bigg\rvert
    \end{align*}
    and, hence, 
    \begin{align*}
	&\EE(M_{\tau}) = \EE \, \bigg|\bigg\{(u,\spec) \in \Pi:\ u \spec(\bm{x}) > \bigvee_{(\tilde u,\tilde \spec) \in \Pi \setminus \{(u,\spec)\}} \tilde u \tilde \spec(\bm{x}) \ \text{for some } \bm{x} \in K,  \\
	&    \hspace{4.6cm}                 u \tau \leq \inf_{\bm{x} \in K} \bigvee_{(\tilde u,\tilde \spec) \in \Pi, 
		\, \tilde u > u} \tilde u \tilde \spec(\bm{x})\bigg\}\bigg| \\
	&\leq \EE \, \bigg|\bigg\{(u,\spec) \in \Pi:\  
	\inf_{\bm{x} \in K} \frac{\bigvee_{(\tilde u,\tilde \spec) \in \Pi \setminus \{(u,\spec)\}} \tilde u \tilde \spec(\bm{x})}{\spec(\bm{x})} < u \leq \inf_{\bm{x} \in K} \frac{\bigvee_{(\tilde u,\tilde \spec) \in \Pi \setminus \{(u,\spec)\}} \tilde u \tilde \spec(\bm{x})} \tau\bigg\}\bigg|.
	\end{align*}
	Applying the Slivnyak-Mecke formula~\eqref{eq:slivnyakmecke} gives
	\begin{align*}
\EE(M_{\tau}) 
	&\leq \int_{C(K)} \int_{C(K)} \int_0^\infty  \mathbf{1}_{\big\{\inf_{\bm{x} \in K} \frac{z(\bm{x})}{\spec(\bm{x})} < u \leq \inf_{\bm{x} \in K} \frac{z(\bm{x})}\tau\big\}}  \, u^{-2} \, \dd u \, \PP\{V \in \dd \spec\} \, \PP\{Z \in \dd z\} .
	\end{align*}
	which is equivalent to Inequality~\eqref{eq:new-bound}.
\end{proof}

\begin{proof}[Proof of {\normalfont \bf Proposition~\ref{prop:sup-err_n}}]
	Analogously to the proof of Proposition~\ref{prop:general-error}, we obtain
	\begin{align*}
	& \PP\big\{  |Z(\bm x) - Z^{(n)}(\bm x)| > f(Z^{(n)},\bm x) 
	\text{ for some } \bm x \in K \mid Z^{(n)}\big\} \\
	={}& \PP\left\{ \exists (u,v) \in \Pi: u v(\bm x_i) < Z^{(n)}(\bm x_i), \, i=1,\ldots,n, \right.\\ & \hspace*{2.5cm} \left. u v(\bm x) > Z^{(n)}(\bm x) + f(Z^{(n)},\bm x) \text{ for some } \bm x \in K\right\}\\
	={}& 1 - \exp\bigg(- \EE_V\bigg\{\int_{\inf_{\bm x \in K} \frac{Z^{(n)}(\bm x) + f(Z^{(n)},\bm x)}{V(\bm x)}}^{\min_{i=1,\ldots,n} \frac{Z^{(n)}(\bm x_i)}{V(\bm x_i)}} u^{-2} \mathrm{d}u\bigg\}\bigg)\\
	={}& 1 - \exp\bigg(-\EE_V\bigg\{\max_{i=1,\ldots,n} \frac{V(\bm x_i)}{Z^{(n)}(\bm x_i)} - \sup_{\bm x \in K} \frac{V(\bm x)}{Z^{(n)}(\bm x) + f(Z^{(n)},\bm x)}\bigg\}_+\bigg).
	\end{align*}
	Taking the expectation with respect to $Z^{(n)}$ finishes the proof. 
\end{proof}

\begin{proof}[Proof of {\normalfont \bf Proposition~\ref{prop:expected_missing_n}}]
	The expected number of missing extremal functions after the $n$th step
	can be expressed as
	\begin{align}
	\EE(M^{(n)}_+) ={}& \EE \, \bigg|\bigg\{(u,\spec) \in \Pi_+:\ u \spec(\bm x_i) <  \bigvee_{(\tilde u, \tilde \spec) \in \Pi} \tilde u \tilde \spec(x_i) \text{ for all } i=1,\ldots,n\bigg\}\bigg| 
	\end{align}
	Consequently, we have that $\EE(M^{(n)}_+)$ equals
	\begin{align*}
	 \EE \, \bigg|\bigg\{(u,\spec) \in \Pi:\  \inf_{\bm x \in K} \frac{\bigvee_{(\tilde u,\tilde \spec) \in \Pi \setminus \{(u,\spec)\}} \tilde u \tilde \spec(\bm x)}{\spec(\bm x)} < u 
	< \min_{i=1}^n \frac{\bigvee_{(\tilde u,\tilde \spec) \in \Pi \setminus \{(u,\spec)\}} 
		\tilde u \tilde \spec(\bm x_i)}{\spec(\bm x_i)}\bigg\}\bigg|.
	\end{align*}
	The Slivnyak-Mecke equation~\eqref{eq:slivnyakmecke} can be applied to the right-hand side. Hence, we obtain that $\EE(M^{(n)}_+)$ equals
	\begin{align*}
	\int_{C(K)} \int_{C(K)} \int_0^\infty  \mathbf{1}_{\big\{\inf_{\bm x \in K} \frac{z(\bm x)}{\spec(\bm x)}
		< u \leq \min_{i=1,\ldots,n} \frac{z(\bm x_i)}{\spec(\bm x_i)} \big\}} \, u^{-2} \, \dd u \, \PP\{V \in \dd \spec\}
	\, \PP\{Z \in \dd z\} .
	\end{align*}
	The latter coincides with \eqref{eq:error-ef}.
\end{proof}

\section{Simulation assessment of the accuracy} \label{app:error-simu}

This section is a step-by-step description how one can assess the approximation error of the Threshold Stopping Algorithm~\ref{sec:survey}.\ref{alg:ts} with threshold $\tau$ by simulation.

\begin{enumerate}[label={(\arabic*)}]
\itemsep2mm
	\item Use the Extremal Functions Algorithm~\ref{sec:survey}.\ref{alg:ef}        to simulate $\Phi_+$. Denote the resulting functions by $\varphi_1,\ldots,\varphi_k$ and set 
	$Z(\bm x) = \max_{i=1,\ldots,k} \varphi_i(\bm x)$, $\bm x \in K$.\smallskip\\
	Note that only the products of the type $\varphi = U V$ are obtained this way, the 
	components $U$ and $V$  with $(U,V) \in \Pi_+$ (as in the proof of Proposition
	\ref{prop:expected_missing}) are unknown.       

\item Simulate the set $\Pi_+ = \{(U_i^+, V_i^+)\}_{i=1,\ldots,k}$
	conditional on $U_i^+ \cdot V_i^+ = \varphi_i$ for $i=1,\ldots,k$.\smallskip\\
Here, the $U_i^+$ are independent with density
	\begin{align*} f_{U_i^+}(u) \propto u^{-2} \PP( V \in \mathrm{d} \varphi_i /u) \end{align*}
	and $V_i^+ = (U_i^+)^{-1} \varphi_i$.

	\item Simulate the entire set $\Pi_{\min{},-} = \{(U_i^-,V_i^-)\}_{i=1,\ldots,l}$ 
	of all non-extremal functions with 
	$U_i^{-} > U_{\min{}} := \min_{i=1,\ldots,k} U_i^+$. \smallskip\\
These form a Poisson point process with intensity 
	$u^{-2} \mathrm{d} u \cdot \PP(V \in \mathrm{d}v)$
	restricted to the set 
	$\{ (u,v) \in [U_{\min{}},\infty) \times C(K): \ u \cdot v(\bm x) < Z(\bm x) \text{ for all } \bm x \in K\}$.

	\item Merge $\Pi_+$ and $\Pi_{\min{},-}$ to the set $\Pi_{\min{}} = \{(U_i,V_i)\}_{i=1,\ldots,k+l}$
	labelling the points in such a way that $U_i > U_{i+1}$.
	\item Set $Z^{(j)}(\bm x) = \max_{i=1,\ldots,j} U_i \cdot V_i(\bm x)$ and define
	\begin{align*}
	T := \min\bigg\{j \in \{1,\ldots,k+l-1\}: \ U_{j+1} \tau < \inf_{\bm x \in K} Z^{(j)}(\bm x) \bigg\},
	\end{align*}
i.e.\ $T$ equals the stopping time $T_{\tau}$ provided that $T_{\tau} < k+l$.
	Otherwise, we have $T=\infty$.

	\item If $T = \infty$, the stopping criterion does not apply before all the 
	extremal functions are simulated, i.e.\ $Z^{(T)} = Z$ and there is no
	error. Otherwise, i.e.\ if $T < k+l$, there is an error. This can either
	be measured in terms of the absolute/relative deviation between $Z$ and
	$Z^{(T)}$ or in terms of the number of missing extremal functions, i.e.\ the cardinality of 
	the set 
	$\{i \in \{1,\ldots,k\}: \ U_i^+ < U_T \}$.
\end{enumerate}
Repeating this procedure, the average error is an
unbiased estimator of the expected simulation error of the Threshold Stopping Algorithm~\ref{sec:survey}.\ref{alg:ts}.

\end{document}